\documentclass[journal]{IEEEtran}

\usepackage{times}
\usepackage{graphicx}
\usepackage{amsmath}
\usepackage{amssymb}
\usepackage{url}
\usepackage{booktabs}
\usepackage{multirow}
\usepackage{xcolor}
\usepackage[T1]{fontenc}
\usepackage{listings}
\usepackage{tabularx}
\usepackage{algorithm}
\usepackage{algorithmic}
\usepackage[normalem]{ulem}
\usepackage[colorlinks=true, urlcolor=blue, linkcolor=black, citecolor=black]{hyperref}
\usepackage[switch]{lineno}

\usepackage{amsmath}
\usepackage{amssymb}
\usepackage{mathtools}

\usepackage{cite}
\usepackage{balance}
\usepackage[most]{tcolorbox}

\usepackage{soul}       

\usepackage{amsthm}

\newtheorem{theorem}{Theorem}
\newtheorem{corollary}{Corollary}
\newtheorem{proposition}{Proposition}

\usepackage{amsthm}

\makeatletter
\renewenvironment{IEEEproof}[1][Proof]{%
  \par\pushQED{\qed}%
  \normalfont
  \topsep6\p@\@plus6\p@\relax
  \trivlist
  \item[\hskip\labelsep
        \bfseries #1\@addpunct{.}]%
}{%
  \popQED\endtrivlist\@endpefalse
}
\makeatother

\newtheoremstyle{boldstyle}
  {3pt}   
  {3pt}   
  {\itshape}  
  {}      
  {\bfseries} 
  {.}     
  { }     
  {}      

\theoremstyle{boldstyle}

\newtcolorbox{takeawaybox}[1][]{%
  colback=gray!5,
  colframe=gray!60,
  coltitle=black,
  fonttitle=\bfseries,
  boxrule=0.4pt,
  arc=1pt,
  left=4pt,
  right=4pt,
  top=4pt,
  bottom=4pt,
  title=#1
}

\newcommand{\dagG}{\mathcal{G}}

\makeatletter

\def\@seccntformat#1{\csname the#1\endcsname\quad}

\makeatother

\title{Explainable Autonomous Cyber Defense using Adversarial Multi-Agent Reinforcement Learning}

\author{
Yiyao Zhang\textsuperscript{1},
Diksha Goel\textsuperscript{2},
Hussain Ahmad\textsuperscript{1}\\[6pt]
\textsuperscript{1}Adelaide University, Australia\\
\textsuperscript{2}CSIRO's Data61, Australia\\[4pt]
Email: yiyao.zhang@student.adelaide.edu.au; 
diksha.goel@csiro.au; hussain.ahmad@adelaide.edu.au\\
Corresponding author: Hussain Ahmad
}

\begin{document}


\maketitle

\begin{abstract}
Autonomous agents are increasingly deployed in both offensive and defensive cyber operations, creating high-speed, closed-loop interactions in critical infrastructure environments. Advanced Persistent Threat (APT) actors exploit "Living off the Land" techniques and targeted telemetry perturbations to induce ambiguity in monitoring systems, causing automated defenses to overreact or misclassify benign behavior as malicious activity. Existing monolithic and multi-agent defense pipelines largely operate on correlation-based signals, lack structural constraints on response actions, and are vulnerable to reasoning drift under ambiguous or adversarial inputs.

We present the Causal Multi-Agent Decision Framework (C-MADF), a structurally constrained architecture for autonomous cyber defense that integrates causal modeling with adversarial dual-policy control. C-MADF first learns a Structural Causal Model (SCM) from historical telemetry and compiles it into an investigation-level Directed Acyclic Graph (DAG) that defines admissible response transitions. This roadmap is formalized as a Markov Decision Process (MDP) whose action space is explicitly restricted to causally consistent transitions. Decision-making within this constrained space is performed by a dual-agent reinforcement learning system in which a threat-optimizing Blue-Team policy is counterbalanced by a conservatively shaped Red-Team policy. Inter-policy disagreement is quantified through a Policy Divergence Score and exposed via a human-in-the-loop interface equipped with an Explainability–Transparency Score (ETS) that serves as an escalation signal under uncertainty.

On the real-world CICIoT2023 dataset, C-MADF reduces the false-positive rate from 11.2 percent, 9.7 percent, and 8.4 percent in three cutting-edge literature baselines to 1.8 percent, while achieving 0.997 precision, 0.961 recall, and 0.979 F1-score. The ETS signal exhibits strong monotonic alignment with evidentiary sufficiency and policy agreement, supporting calibrated escalation of autonomous actions under uncertainty.
\end{abstract}

\begin{IEEEkeywords}
Autonomous Cyber Defense, Explainable AI, Causal Modeling, Multi-Agent Reinforcement Learning, Decision-Making, Adversarial Robustness.
\end{IEEEkeywords}

\section{Introduction}
\label{sec:introduction}

\IEEEPARstart{T}{he} rapid digital transformation of critical infrastructure has fundamentally altered the security and dependability landscape of cyber-physical systems \cite{goel2023enhancing, goel2018overview}. Modern environments increasingly incorporate microservice-based architectures \cite{ahmad2024smart, ahmad2025towards, ahmad2025resilient} and advanced machine learning components \cite{zhang2025regimefolio, chen20253s, jois2026australian}, while domains such as power grids and industrial control systems are now deeply interconnected with enterprise IT, cloud platforms, and remote management interfaces \cite{scada_security, tdsc_apt}. This integration improves efficiency but also introduces tightly coupled dependencies that amplify cyber risk and potential cascading failures under attack \cite{scada_security, ahmad2023review}.

Advanced Persistent Threat (APT) actors increasingly exploit this complexity using stealth-oriented strategies such as Living-off-the-Land behavior \cite{jayalath2024microservice}, telemetry manipulation, and staged lateral movement \cite{lotl_report, apt_survey}. In parallel, organizations are adopting autonomous and semi-autonomous defense pipelines, including reinforcement learning and LLM-assisted agents \cite{chopra2026chatnvd, abdulsatar2025towards}, to triage alerts and respond at machine speed \cite{rl_cyber, drl_attack, aica_book, goel2024optimizing}. Recent surveys on LLM-based autonomous cyber agents and hallucination behavior further indicate that unconstrained agent reasoning can amplify operational risk under uncertainty \cite{llm_agents_survey, hallucination}. While this shift improves responsiveness, it also introduces a key dependability problem: under ambiguous or adversarial observations, correlation-only automation can trigger unsafe mitigation actions, false positives, and operational disruption \cite{xai_survey, adversarial_ml, macas2023adversarial}.

Existing lines of work provide important but partial capabilities. Causal analytics improve post hoc reasoning but often remain detached from executable policy constraints \cite{tdsc_causal, causal_inference, abbas2024robust}; explainable AI improves interpretability but does not enforce admissible action trajectories \cite{xai_cyber, xai_survey, abbas2025scalar}; and multi-agent security systems improve decomposition yet frequently lack verifiable workflow restrictions \cite{autogen, tdsc_mas_security}. As a result, current approaches rarely provide an end-to-end mechanism that jointly delivers causal grounding, constrained decision execution, adversarial internal validation, and calibrated human escalation \cite{goel2025co, goel2025unveiling}.

To address this gap, we propose the Causal Multi-Agent Decision Framework (C-MADF), a structured architecture for dependable autonomous cyber defense. Operationally, C-MADF proceeds in four dependent stages: (i) telemetry is used to learn a Structural Causal Model (SCM), (ii) the learned SCM is compiled into a constrained Markov Decision Process (MDP)--Directed Acyclic Graph (DAG) investigation roadmap, (iii) Blue-Team and Red-Team policies deliberate over admissible actions, and (iv) a human-in-the-loop interface computes ETS to govern autonomous execution versus escalation.\\

The main contributions of this paper are as follows:
\begin{itemize}
    \item \textbf{Causally constrained investigation planning.} We learn a SCM from security telemetry and compile it into an investigation-level MDP-DAG that restricts autonomous actions to causally admissible transitions.
    \item \textbf{Adversarial dual-policy deliberation.} We introduce a Council of Rivals design in which a threat-optimizing Blue-Team policy is counterbalanced by a conservative Red-Team policy, with disagreement quantified as a Policy Divergence Score for uncertainty-aware arbitration.
    \item \textbf{Explainable escalation mechanism.} We define an Explainability--Transparency Score (ETS) that aggregates explanation quality, evidentiary sufficiency, and policy consistency to support transparent human-in-the-loop intervention.
    \item \textbf{Verification-oriented evaluation.} We provide formal and empirical analysis, including constrained-transition reasoning, baseline comparisons, and calibration checks on the real-world CICIoT2023 dataset.
\end{itemize}

The remainder of this paper is organized as follows.
Section~\ref{sec:related_work} reviews related work and clarifies the research gap.
Section~\ref{sec:preliminaries}  introduces the formal preliminaries, and Section~\ref{sec:problem_formulation} defines the system and adversary models.
Section~\ref{sec:cmadf_framework} details the proposed framework.
Section~\ref{sec:security_analysis} presents security and verification-oriented analysis.
Section~\ref{sec:evaluation} reports quantitative results on CICIoT2023.
Section~\ref{sec:discussion} discusses implications and limitations, and Section~\ref{sec:conclusion} concludes the paper. Besides, Table~\ref{tab:acronyms} shows the definitions used in the paper.

\color{black}

\begin{table*}[t]
\centering
\small
\caption{Acronyms and standardized terminology used in this paper}
\label{tab:acronyms}
\begin{tabularx}{\textwidth}{@{} l X @{}}
\toprule
\textbf{Acronym/Term} & \textbf{Definition used in this paper} \\
\midrule
C-MADF & Causal Multi-Agent Decision Framework \\
APT & Advanced Persistent Threat \\
SCM & Structural Causal Model \\
DAG & Directed Acyclic Graph \\
MDP & Markov Decision Process \\
MDP-DAG roadmap & Causally constrained investigation roadmap compiled from SCM into executable MDP transitions \\
ETS & Explainability--Transparency Score \\
AICA & Autonomous Intelligent Cyber Defense Agent \\
XAI & Explainable Artificial Intelligence \\
LLM & Large Language Model \\
MADRL & Multi-Agent Deep Reinforcement Learning \\
ICS & Industrial Control System \\
IT/OT & Information Technology / Operational Technology \\
IoT & Internet of Things \\
CICIoT2023 & Canadian Institute for Cybersecurity IoT Dataset (2023) \\
\bottomrule
\end{tabularx}
\end{table*}

\color{black}

\section{Background and Related Work}
\label{sec:related_work}

Autonomous cyber defense has progressed significantly in response to increasingly adaptive adversaries and the operational complexity of modern cyber-physical infrastructure. In this section, we review the principal research directions that inform C-MADF and identify structural limitations that motivate our framework. Specifically, we examine prior work in: (i) causal inference for security analytics, (ii) machine learning for intrusion detection, (iii) autonomous cyber-defense agents, (iv) explainable AI in cybersecurity, (v) multi-agent security architectures, and (vi) adversarial machine learning. The comparison results can be found in Table \ref{tab:related_work_comparison}. Related strands on immersive cyber situational awareness, game-theoretic defense models, and sector-specific IoT security assessments further reinforce the need for structured and operator-aware cyber defense workflows \cite{ahmad2025survey, game_theory_security, security}. 

These research directions highlight both the strengths and limitations of current approaches in handling complex and adversarial environments.

Our analysis emphasizes a common limitation across these lines of work: although individually powerful, existing approaches do not simultaneously provide causally grounded investigation logic, formally constrained decision trajectories, and adversarial multi-agent deliberation within a verifiable decision-theoretic framework.

\begin{table*}[t]
\centering
\small
\setlength{\tabcolsep}{5pt}  

\caption{Comparison of C-MADF with Representative Autonomous Cyber Defense Approaches}
\label{tab:related_work_comparison}
\begin{tabular}{p{4.5cm} c c c c c}
\toprule
\textbf{Approach} &
\textbf{Formally Structured} &
\textbf{Causal} &
\textbf{Multi-Agent} &
\textbf{Integrated} &
\textbf{Decision-Level} \\
 &
\textbf{Decision Model} &
\textbf{Grounding} &
\textbf{Deliberation} &
\textbf{XAI} &
\textbf{Robustness} \\
\midrule

Causal Security Analytics \cite{tdsc_causal, causal_inference} &
$\triangle$ (offline) & $\checkmark$ & $\times$ & $\triangle$ & $\triangle$ \\

ML for Intrusion Detection \cite{sujamary2024network, imrana2021bidirectional, devendiran2024dugat} &
$\times$ & $\times$ & $\times$ & $\times$ & $\triangle$ (data-level) \\

AICA-style Agents \cite{aica_book} &
$\triangle$ & $\times$ & $\times$ & $\times$ & $\triangle$ \\

XAI for Cybersecurity \cite{xai_cyber, xai_survey} &
$\times$ & $\times$ & $\times$ & $\checkmark$ (post-hoc) & $\times$ \\

MAS in Cybersecurity \cite{autogen, tdsc_mas_security} &
$\times$ & $\times$ & $\checkmark$ & $\times$ & $\triangle$ \\

Adversarial ML Defenses \cite{adversarial_ml} &
$\times$ & $\times$ & $\times$ & $\times$ & $\checkmark$ (model-level) \\

\midrule
\textbf{C-MADF (this work)} &
\textbf{$\checkmark$ (MDP-DAG)} &
\textbf{$\checkmark$ (online SCM)} &
\textbf{$\checkmark$ (Council)} &
\textbf{$\checkmark$ (ETS)} &
\textbf{$\checkmark$ (process-level)} \\

\bottomrule
\end{tabular}

\vspace{0.3em}
\footnotesize \textit{Legend:}
$\checkmark$ = explicitly supported;
$\triangle$ = partial or context-dependent support;
$\times$ = not explicitly addressed.
\end{table*}

\subsection{Causal Inference in Security}

Causal inference has been increasingly applied to cybersecurity tasks including root cause analysis, anomaly detection, and attack attribution \cite{tdsc_causal, causal_inference, spirtes2000causation}. By modeling cause–effect relationships rather than correlations, causal approaches improve robustness to spurious associations and distributional shifts. While causal models have been applied to enhance anomaly detection and improve data quality in security analytics \cite{gayathri2024hybrid}, these applications primarily focus on improving the fidelity of training data and detection accuracy rather than constraining autonomous decision processes. For instance, generative models combined with causal reasoning have been used to address data imbalance in insider threat detection, but the learned causal relationships are not compiled into enforceable workflow constraints for autonomous response systems.

Nonetheless, most existing work applies causal models for offline analysis or enhanced detection, rather than for constraining online decision processes. Causal security analytics typically stop at producing improved explanatory or predictive models; they do not govern which autonomous actions are admissible in real time.

C-MADF extends this line of work by compiling a learned SCM into an MDP-DAG roadmap that explicitly restricts allowable state transitions. This integration enables causal consistency to function as a hard constraint on action sequences, rather than solely as an explanatory overlay.

\subsection{Machine Learning for Intrusion Detection}

Machine learning-based intrusion detection has evolved significantly \cite{ahmad2025future, ullah2026skills}, with contemporary approaches achieving high accuracy through ensemble methods \cite{mohiuddin2023intrusion}, deep neural architectures \cite{sujamary2024network, imrana2021bidirectional}, and bio-inspired optimization \cite{alazab2022new, devendiran2024dugat}. Bidirectional LSTM networks have proven particularly effective for sequential threat pattern recognition \cite{imrana2021bidirectional}, while hybrid meta-heuristic techniques combined with gradient boosting classifiers have demonstrated robustness across diverse attack scenarios \cite{mohiuddin2023intrusion}.

These advances are underpinned by foundational deep-learning principles for representation learning and optimization \cite{goodfellow2016deep}.

Recent systematic reviews highlight the expansion of ML-based intrusion detection to IoT environments \cite{goel2024machine}, where resource constraints and heterogeneous device characteristics introduce additional challenges \cite{essa2024review}. However, several fundamental limitations persist. First, model reliability can be compromised by data preprocessing artifacts and pattern leakage, particularly when training and deployment distributions diverge \cite{bouke2023empirical}. Second, optimization strategies for improving detection accuracy, such as chaotic optimization in deep learning architectures \cite{devendiran2024dugat}, do not address the interpretability requirements for autonomous response systems. Third, decision tree-based ensemble methods, while offering improved classification performance \cite{sun2024improved}, lack the causal grounding necessary to support verifiable investigation workflows.

These approaches excel at threat detection but do not provide mechanisms for causally constrained autonomous response, adversarial validation of mitigation actions, or structured human oversight under epistemic uncertainty. C-MADF complements detection-focused ML systems by providing a principled framework for translating detection outputs into verifiable, explainable autonomous actions.

\subsection{Autonomous Intelligent Cyber-defense Agents (AICA)}

Autonomous Intelligent Cyber-defense Agents (AICA) have been proposed as scalable mechanisms for continuous monitoring and automated response in contested environments \cite{aica_book}. Despite their promise, existing AICA implementations exhibit three critical structural limitations. \textbf{First}, decision-making logic is predominantly correlation-based rather than causally grounded, remaining vulnerable to LotL techniques \cite{lotl_report} and zero-day exploits \cite{zero_day_detection}. Under adversarial telemetry manipulation, correlation-based systems cannot distinguish genuine causal dependencies from spurious co-occurrence \cite{causal_inference}. \textbf{Second}, internal decision processes lack principled auditability; while recent proposals incorporate post-hoc explanation modules \cite{xai_survey}, these operate as interpretive overlays rather than structural constraints on action admissibility. \textbf{Third}, AICA frameworks lack formally constrained investigation pathways, potentially leading to self-inflicted denial-of-service events under adversarial ambiguity \cite{tdsc_apt}.

High detection accuracy \cite{mohiuddin2023intrusion, sujamary2024network, imrana2021bidirectional} and autonomous response reliability are orthogonal properties \cite{bouke2023empirical}. AICA-style agents do not embed policies within learned SCMs or constrain mitigations through decision-theoretic roadmaps amenable to formal analysis, precluding trajectory-level guarantees essential for critical infrastructure certification \cite{tdsc_formal_verification}.

C-MADF addresses these gaps by integrating (i) a learned SCM that filters spurious correlations, (ii) a causally constrained MDP-DAG restricting policy exploration to valid investigation trajectories, and (iii) adversarial multi-agent deliberation exposing epistemic uncertainty through calibrated disagreement.

\subsection{Explainable AI (XAI) in Cybersecurity}

As AI systems assume increasingly critical roles in security operations, explainability has become essential for trust, accountability, and safe deployment \cite{xai_cyber, xai_survey}. Foundational work on explanation in AI provides an additional conceptual basis for this requirement \cite{tdsc_xai}. Existing XAI techniques include feature attribution (e.g., Local Interpretable Model-agnostic Explanations (LIME), SHapley Additive exPlanations (SHAP)), rule extraction, and counterfactual explanations.

However, most cybersecurity-focused XAI approaches are post-hoc and model-centric. They explain static predictions rather than constraining the evolution of autonomous decision policies over time. In interactive cyber defense environments, explanations must be generated in real time and influence whether actions are executed.

While deep learning models have achieved state-of-the-art performance in network intrusion detection tasks \cite{sujamary2024network, imrana2021bidirectional, devendiran2024dugat}, their black-box nature poses significant challenges for operational deployment in security-critical environments. Recent work has explored various explainability techniques tailored to cybersecurity applications, yet these approaches typically provide post-hoc rationalizations without influencing the admissibility of autonomous actions \cite{essa2024review}. The integration of explainability as a structural constraint on decision processes, rather than as an interpretive overlay, remains an open challenge.

Recent surveys on counterfactual explanations and algorithmic recourse further highlight the need for explanations that support actionable intervention pathways \cite{causal_xai}.

C-MADF integrates explainability as a structural property of the decision process rather than as a wrapper around a trained model. The ETS is embedded within the multi-agent reinforcement learning objective and is directly tied to the constrained MDP-DAG structure. Thus, interpretability signals influence policy learning and execution gating, rather than merely post-hoc justification.

This design choice is also aligned with broader human-in-the-loop machine learning principles for safety, accountability, and calibrated oversight in high-stakes domains \cite{human_in_loop_ml}.

\subsection{Multi-Agent Systems in Cybersecurity}

Multi-agent systems have been introduced to decompose complex cybersecurity tasks across specialized agents \cite{autogen, tdsc_mas_security}. Collaborative monitoring, distributed reasoning, and coordinated response mechanisms have demonstrated improved coverage and adaptability compared to monolithic architectures.

Foundational multi-agent formulations and modern MADRL critiques both emphasize coordination--stability trade-offs that become critical under adversarial uncertainty \cite{shoham2008multiagent, madrl_survey}.

However, existing MAS approaches typically operate within unconstrained state spaces. Under ambiguous or adversarial evidence, agents may converge toward mutually reinforcing but incorrect threat narratives, a phenomenon akin to reasoning drift. Because decision trajectories are not structurally constrained by formal workflow models, these systems lack guarantees that disruptive actions are preceded by appropriate evidence-gathering states.

Moreover, representative MAS architectures do not ground their action spaces in explicitly learned causal models, nor do they embed agent deliberation within a decision process suitable for formal verification. Consequently, prior MAS work does not provide enforceable properties such as "no mitigation action without traversing required investigative states," which is a central objective of C-MADF.

Recent work has explored adversarial robustness in distributed security systems, demonstrating that deep learning-enabled multi-agent architectures remain vulnerable to carefully crafted adversarial examples that can evade detection or provoke unintended defensive responses \cite{macas2023adversarial}. While these studies highlight the importance of adversarial training and input sanitization at the classifier level, they do not address the structural constraints necessary to prevent unsafe multi-step mitigation sequences or provide formal guarantees over investigation trajectories, gaps that C-MADF directly addresses through its causally constrained MDP-DAG architecture.

\subsection{Adversarial Machine Learning}

Adversarial machine learning has systematically demonstrated that deep learning models deployed in cybersecurity applications are vulnerable to carefully crafted input perturbations that induce confident yet incorrect predictions \cite{adversarial_ml}. In security settings, adversarial examples can enable evasion attacks, trigger false positives in intrusion detection systems, or provoke unintended defensive actions across domains such as network intrusion detection, malware classification, and insider threat analysis.

Consistent with Table 1, the adversarial-ML studies considered here focus on defenses such as adversarial training, input sanitization, and augmentation-based robustness mechanisms at the classifier or perception level; however, these methods do not reason over long-horizon investigation workflows, impose structural constraints on mitigation sequences, or provide trajectory-level guarantees that disruptive actions are preceded by required evidence-gathering states. \cite{adversarial_ml}

C-MADF complements model-level robustness techniques by introducing structured counterfactual scrutiny at the decision-process level. A conservative Red-Team policy challenges proposed Blue-Team interventions within a causally constrained state space, thereby reducing unsafe mitigation trajectories under epistemic uncertainty.

\subsection{Summary and Research Gap}
\label{sec:gap_analysis}

\begingroup

However, several core ingredients of C-MADF are now established in isolation, and this manuscript must be more explicitly positioned against the latest state-of-the-art work in closely related areas.

The prior literature provides essential components for autonomous cyber defense but does not unify them within a single verifiable framework:

\begin{itemize}
    \item Causal analytics \cite{tdsc_causal, causal_inference} improve root cause analysis but do not constrain online decision processes.
    \item ML-based intrusion detection systems \cite{sujamary2024network, mohiuddin2023intrusion, imrana2021bidirectional, devendiran2024dugat, alazab2022new} achieve high detection accuracy but focus on classification rather than autonomous investigation and response, and remain vulnerable to data quality issues and distributional shift \cite{bouke2023empirical}.
    \item AICA systems \cite{aica_book} emphasize autonomy but lack causal grounding and formally constrained investigation paths.
    \item XAI approaches \cite{xai_cyber, xai_survey} enhance interpretability but do not govern real-time policy admissibility.
    \item Security-oriented MAS \cite{autogen, tdsc_mas_security} improve task decomposition yet operate in unconstrained state spaces without enforceable trajectory-level guarantees.
    \item Adversarial ML defenses \cite{adversarial_ml} improve model robustness but do not constrain long-horizon decision sequences.
\end{itemize}

C-MADF addresses this gap by integrating (i) learned SCM, (ii) an MDP-DAG decision process that enforces admissible action sequences, and (iii) adversarial multi-agent deliberation coupled with quantitative explainability signals. This combination yields a causally grounded, verifiable, and explainable multi-agent controller for autonomous cyber defense.

This gap is not merely terminological; it is operational. In practice, systems that optimize only detector-level accuracy can still produce unsafe intervention trajectories when evidence is sparse, contradictory, or strategically manipulated. By explicitly separating evidence acquisition states from high-impact mitigation states and enforcing admissibility through the roadmap, C-MADF targets this deployment-critical failure mode directly rather than indirectly via threshold tuning.

Accordingly, our contribution is positioned as a systems-level integration advance: we do not claim novelty for each individual building block in isolation, but for the end-to-end coupling of causal structure learning, constrained sequential control, adversarial policy cross-examination, and escalation-aware explainability under one coherent decision pipeline.
\endgroup

\section{Preliminaries}
\label{sec:preliminaries}

This section formalizes the mathematical foundations underlying C-MADF. We introduce the MDP formulation used to model constrained investigation workflows and the SCM framework used to encode causal dependencies among security-relevant variables. Notation introduced here is used consistently throughout the paper. The reference table can be found in Table \ref{tab:notation}.

\subsection{MDP}

We model the autonomous investigation and mitigation workflow as a finite-horizon discounted MDP,
\[
\mathcal{M} = \langle \mathcal{S}, \mathcal{A}, P, R, \gamma \rangle.
\]
Our notation follows canonical treatments in MDP and reinforcement learning literature \cite{mdp_book, sutton_barto}.
Table \ref{tab:acronyms} presents the symbol-level definitions used in the paper.

\medskip

\noindent\textbf{Constrained Action Space.}
In C-MADF, the action space $\mathcal{A}$ is not unconstrained. Instead, admissible actions at state $s$ are restricted to a subset $\mathcal{A}_C(s) \subseteq \mathcal{A}$ determined by a learned causal investigation graph (defined below). This structural restriction differentiates our formulation from conventional unconstrained MDP-based cyber-defense models.

\subsection{SCM}

To encode causal dependencies among investigation-relevant variables, we adopt the SCM framework \cite{causal_inference}.

An SCM is defined as a tuple
\[
\mathcal{M}_C = \langle U, V, F \rangle,
\]
where:

\begin{itemize}
    \item $U$ is a set of exogenous (unobserved) variables.
    \item $V = \{V_1, \dots, V_n\}$ is a set of endogenous variables.
    \item $F = \{f_1, \dots, f_n\}$ is a collection of structural equations such that
    \[
    V_i = f_i(\mathrm{Pa}_i, U_i),
    \]
    where $\mathrm{Pa}_i \subseteq V \setminus \{V_i\}$ denotes the set of parents (direct causes) of $V_i$ in the model.
\end{itemize}

The SCM induces a DAG $\mathcal{G} = (V, E)$,

\subsection{Assumptions}

Throughout the paper, we make the following assumptions:

\begin{itemize}
    \item The learned SCM is acyclic and provides a probabilistic structural approximation of the true (unknown) causal dependencies among investigation-relevant variables. We do not assume perfect recovery of the ground-truth causal graph. Instead, we assume bounded structural mis-specification, characterized by an edge-level error rate $\varepsilon$ relative to the true causal graph.
    
    \item The transition dynamics of the investigation process satisfy the Markov property at the level of the abstract roadmap states, i.e., transition probabilities are conditionally independent of prior history given the current state and action.
    
    \item Structural errors in the learned SCM are bounded such that their induced effect on admissible roadmap transitions is limited to a controlled fraction of state–action pairs. All probabilistic safety and bounded-violation results presented later are conditional on this bounded structural error assumption.
\end{itemize}

These assumptions reflect common modeling practices in causal discovery and reinforcement learning. However, we emphasize that our safety and robustness guarantees are conditional on bounded causal mis-specification rather than on exact structural recovery.

\begin{table}[t]
\centering
\caption{Summary of Core Notation}
\label{tab:notation}
\begin{tabularx}{\columnwidth}{@{} l X @{}}
\toprule
\textbf{Symbol} & \textbf{Description} \\ 
\midrule

$\mathcal{S}, \mathcal{A}$ & State space and action space of the MDP. \\

$P(s' \mid s,a), R(s,a), \gamma$ & Transition kernel, reward function, and discount factor. \\

$\pi, V^{\pi}, Q^{\pi}$ & Policy, state-value function, and action-value function under policy $\pi$. \\

$\mathcal{G} = (\mathcal{V}, E)$ & DAG induced by the learned SCM. \\

$\mathcal{M}_C = \langle U, \mathcal{V}, F \rangle$ & SCM with exogenous variables $U$, endogenous variables $\mathcal{V}$, and structural equations $F$. \\

$\pi_B, \pi_R$ & Blue-Team and Red-Team policies in the dual-agent architecture. \\

$\theta_B, \theta_R$ & Parameters of the Blue-Team and Red-Team policy networks. \\

$\mathcal{D}(s_t)$ & Policy Divergence Score measuring disagreement between $\pi_B$ and $\pi_R$ at state $s_t$. \\

$\mathcal{A}_C(s)$ & Causally admissible action subset at state $s$ induced by $\mathcal{G}$. \\

ETS & Explainability–Transparency Score used for execution gating and escalation. \\

\bottomrule
\end{tabularx}
\end{table}

\section{Problem Formulation}
\label{sec:problem_formulation}

We formalize the problem of causally grounded, verifiable, and explainable autonomous cyber defense under adversarial telemetry manipulation. We specify (i) the system model, (ii) the adversary model, and (iii) the design objectives that C-MADF must satisfy.

\subsection{System Model}

We consider a cyber-physical network consisting of a set of nodes 
\[
\mathcal{N} = \{n_1, \dots, n_m\},
\]
representing servers, workstations, industrial controllers, and networking devices. The global system state evolves over discrete time steps and is partially observed through telemetry streams 
\[
\mathcal{T}_t = \{ \text{logs}, \text{flows}, \text{system events}, \text{performance metrics} \}.
\]

A Security Operations Center (SOC) monitors $\mathcal{T}_t$ and deploys an autonomous defense stack composed of interacting agents. These agents may execute actions from a predefined action set 
\[
\mathcal{A} = \{\text{collect evidence}, \text{isolate host}, \text{block flow}, \dots\},
\]
subject to policy constraints defined in Section~\ref{sec:preliminaries}.

Agents share a learned SCM and operate within a causally constrained MDP. Action execution may be automated or escalated to a supervisory adjudication. The supervisory adjudication serves as the final authority for high-impact interventions and has access to structured explanations and uncertainty signals.

\subsection{Adversary Model}

We consider a strong but bounded adversary consistent with Advanced Persistent Threat (APT) capabilities. The adversary may exploit several vectors to disrupt the system, which are systematically summarized and formally described in Table~\ref{tab:threat_model}.

Capability-level definitions are presented in the Table~\ref{tab:threat_model}.

\subsubsection{Constraints.}
\label{subsubsec:adversary_constraints}

Scenario constraints under the partial-compromise assumption are summarized in a dedicated table: Table~\ref{tab:adversary_constraints}.

\begin{table}[t]
\centering
\caption{Adversary Capabilities under the Partial-Compromise Model}
\label{tab:threat_model}
\begin{tabularx}{\columnwidth}{@{} >{\hsize=0.8\hsize}X >{\hsize=1.2\hsize}X @{}}
\toprule
\textbf{Capability} & \textbf{Formal Description} \\ 
\midrule

Initial Compromise &
Obtains unauthorized access to a subset $\mathcal{N}_c \subset \mathcal{N}$ via credential theft, phishing, or exploitation of vulnerabilities. \\

Lateral Movement &
Moves within $\mathcal{N}$ using legitimate administrative tools and credentials (e.g., LotL techniques) to minimize detection signals. \\

Selective Telemetry Manipulation &
Drops, delays, replays, injects, or perturbs subsets of host-, network-, or control-layer telemetry streams to induce ambiguity in state estimation. \\

Partial Historical Poisoning &
Introduces bounded perturbations into a fraction of historical logs used for causal discovery, without full rewriting of all historical data. \\

Evasion Strategies &
Mimics benign operational patterns and exploits monitoring blind spots to delay detection or trigger false mitigation decisions. \\

\bottomrule
\end{tabularx}
\end{table}

\begin{table}[t]
\centering
\caption{Scenario Constraints under the Partial-Compromise Model}
\label{tab:adversary_constraints}
\begin{tabularx}{\columnwidth}{@{} >{\hsize=0.9\hsize}X >{\hsize=1.1\hsize}X @{}}
\toprule
\textbf{Constraint} & \textbf{Scenario Restriction} \\ 
\midrule

Node-Level Compromise Bound &
Adversary control is limited to a strict subset $\mathcal{N}_c \subset \mathcal{N}$; complete network takeover is excluded. \\

Telemetry Survivability &
The adversary cannot suppress all sensing channels simultaneously; at least one trustworthy telemetry path remains observable. \\

Defender Runtime Integrity &
The C-MADF execution environment, source code, and runtime parameters are not directly compromiseable by the adversary. \\

Secure-Channel and Crypto Integrity &
The adversary cannot subvert secure communication channels or break underlying cryptographic primitives. \\

Out-of-Scope Full-Compromise Attacks &
Full compromise of the defense stack, arbitrary code execution inside C-MADF, and complete rewriting of all historical data are out of scope. \\

\bottomrule
\end{tabularx}
\end{table}

\begin{figure*}[h!]
    \centering
    \includegraphics[width=\textwidth]{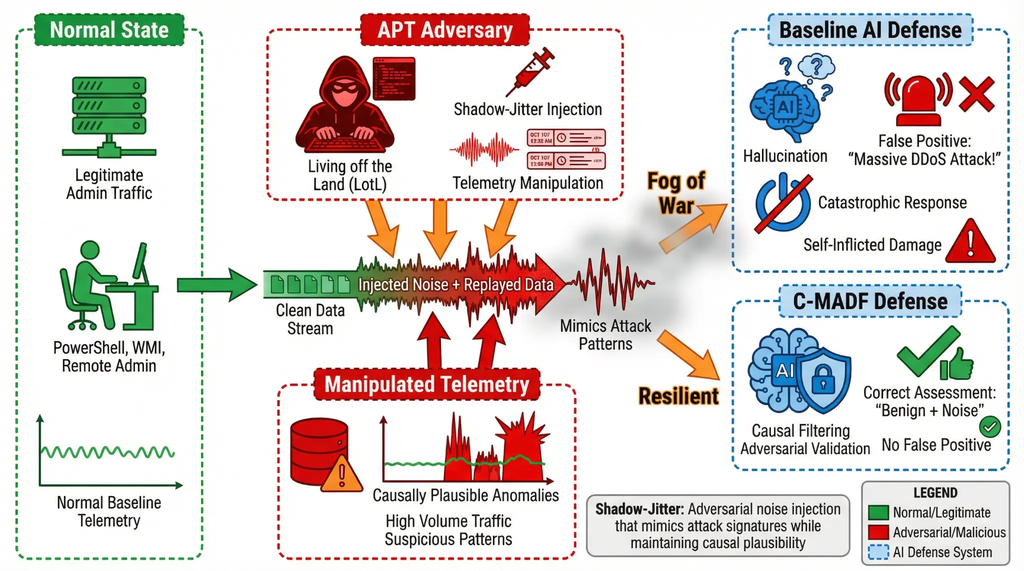}
\caption{Illustration of a Shadow-Jitter telemetry manipulation scenario. Controlled perturbations in host and network logs distort apparent event correlations, leading an unconstrained correlation-based defense to misclassify benign activity as malicious. In contrast, C-MADF applies causal filtering and adversarial dual-policy validation within a constrained MDP-DAG structure, reducing spurious mitigation actions under ambiguous observations.}
    \label{fig:shadow_jitter_attack}
\end{figure*}

\subsubsection{Formal Definition of Shadow-Jitter Telemetry Manipulation}

We now formalize the Shadow-Jitter adversary introduced informally in Section~1. 
Shadow-Jitter models a bounded, partially controlled telemetry perturbation 
process designed to distort apparent causal relationships while preserving 
superficial plausibility.\\

The conceptual mechanics of this adversary are illustrated in Figure~\ref{fig:shadow_jitter_attack}. 
The figure demonstrates how the injection of controlled noise into host and network logs distorts apparent event correlations. This visualization is critical for understanding how conventional unconstrained defenses are misled into perceiving a standard data exfiltration attack, whereas C-MADF uses this as a basis for causal filtering and adversarial validation.

\noindent \textit{Telemetry Representation.} Let the observed telemetry stream at discrete time $t$ be represented 
as a feature vector

\begin{equation}
T_t \in \mathbb{R}^d,
\end{equation}

where $d$ denotes the number of aggregated host-, network-, and 
control-level features extracted from raw logs and event streams 
(Section~\ref{sec:problem_formulation}).

Let

\begin{equation}
\mathbf{T}_{0:T} = \{T_0, T_1, \dots, T_T\}
\end{equation}

denote a telemetry trajectory over horizon $T$.

\noindent \textit{Bounded Perturbation Model.} A Shadow-Jitter adversary applies a perturbation operator 
$\mathcal{A}_{SJ}$ to the telemetry stream such that

\begin{equation}
\tilde{T}_t = \mathcal{A}_{SJ}(T_t) = T_t + \delta_t,
\end{equation}

where $\delta_t \in \mathbb{R}^d$ is an adversarial perturbation vector 
and $\tilde{T}_t$ denotes manipulated telemetry.

We impose a bounded perturbation constraint:

\begin{equation}
\|\delta_t\|_p \leq \epsilon, \quad \forall t,
\end{equation}

for some $p \in \{1,2,\infty\}$, where:

\begin{itemize}
    \item $\epsilon > 0$ is the attack strength parameter,
    \item $\|\cdot\|_p$ denotes the $\ell_p$-norm.
\end{itemize}

The parameter $\epsilon$ controls the magnitude of instantaneous 
telemetry distortion.\\

\noindent \textit{Partial Channel Control.} Consistent with the partial-compromise assumption 
(Section~\ref{subsubsec:adversary_constraints}), the adversary controls only a subset 
of telemetry channels $\mathcal{C} \subset \{1,\dots,d\}$ such that

\begin{equation}
\delta_t^i = 0 \quad \forall i \notin \mathcal{C},
\end{equation}

and

\begin{equation}
\frac{|\mathcal{C}|}{d} \leq \kappa,
\end{equation}

where $\kappa \in (0,1)$ denotes the compromised-channel fraction.

This models selective log suppression, injection, replay, or 
distortion at compromised nodes only.

\noindent \textit{Temporal Distortion (Replay/Delay).} To model timing-based perturbations, the adversary may introduce 
bounded temporal shifts:

\begin{equation}
\tilde{T}_t^i = T_{t - \Delta_i(t)}^i,
\end{equation}

subject to

\begin{equation}
0 \leq \Delta_i(t) \leq \Delta_{\max},
\end{equation}

where $\Delta_{\max}$ denotes the maximum allowable delay and 
distortion is restricted to channels $i \in \mathcal{C}$.

\noindent \textit{Partial Historical Poisoning.} Let $D$ denote the historical dataset used to learn the SCM. The adversary may construct

\begin{equation}
\tilde{D} = D \cup D_{\text{poison}},
\end{equation}

subject to

\begin{equation}
\frac{|D_{\text{poison}}|}{|D|} \leq \rho,
\end{equation}

where $\rho \in (0,1)$ denotes the poisoning rate.

The poisoning process must preserve marginal plausibility 
constraints to avoid trivial detection.

\noindent \textit{Shadow-Jitter Adversary.} We define the Shadow-Jitter adversary as the bounded parameter tuple

\begin{equation}
\mathcal{A}_{SJ} = (\epsilon, \kappa, \Delta_{\max}, \rho),
\end{equation}

such that telemetry manipulation satisfies:

\begin{equation}
\|\delta_t\|_p \leq \epsilon,
\quad
\frac{|\mathcal{C}|}{d} \leq \kappa,
\quad
\Delta_i(t) \leq \Delta_{\max},
\quad
\frac{|D_{\text{poison}}|}{|D|} \leq \rho.
\end{equation}

This adversary model captures:

\begin{itemize}
    \item bounded amplitude perturbation,
    \item partial channel compromise,
    \item bounded temporal distortion,
    \item limited historical poisoning.
\end{itemize}

All robustness and security guarantees in 
Section~\ref{sec:security_analysis} are defined with respect to 
this bounded adversarial model.

\subsection{Design Objectives}

Table~\ref{tab:goals_mapping} presents the objective definitions and corresponding C-MADF mechanisms.

\begin{table}[t]
\centering
\caption{Mapping Design Objectives to C-MADF Mechanisms}
\label{tab:goals_mapping}
\begin{tabularx}{\columnwidth}{@{} l 
    >{\raggedright\arraybackslash\hsize=0.8\hsize}X 
    >{\raggedright\arraybackslash\hsize=1.2\hsize}X @{}}
\toprule
\textbf{ID} & \textbf{Objective Definition} & \textbf{Mechanisms in C-MADF} \\ 
\midrule

G1 & Maintain high detection recall while limiting false-positive mitigation actions under bounded telemetry manipulation and partial historical poisoning. & Causally constrained action space derived from learned DAG $\mathcal{G}$; dual-policy adversarial deliberation; divergence-based escalation under high policy disagreement $\mathcal{D}(s_t)$. \\ \addlinespace

G2 & Ensure every executable mitigation action corresponds to an admissible transition in a formally defined investigation structure. & MDP-DAG structure restricting admissible state transitions; complete logging of state--action trajectories; compatibility with temporal-logic and model-checking analysis. \\ \addlinespace

G3 & Provide structured evidence traces and uncertainty signals for each proposed or executed action to support human oversight. & Human-in-the-loop interface exposing evidence traces and alternative hypotheses; ETS aggregating clarity, coverage, and policy uncertainty; explicit visualization of policy divergence $\mathcal{D}(s_t)$. \\ \addlinespace

G4 & Scale to large networks and high-volume telemetry streams without exponential growth in online decision complexity. & Modular architecture separating offline causal discovery from online decision-making; compact dual-policy networks $\pi_B, \pi_R$; reduced branching via constrained action subsets $\mathcal{A}_C(s)$. \\ \addlinespace

G5 & Update policies and causal estimates over time while preserving structural constraints on admissible action sequences. & Periodic or drift-triggered SCM re-estimation; continual reinforcement learning updates; recalibration of escalation thresholds under distributional shift. \\

\bottomrule
\end{tabularx}
\end{table}

\section{The C-MADF Framework}
\label{sec:cmadf_framework}

The Causal Multi-Agent Decision Framework (C-MADF) is a structured architecture for verifiable and explainable autonomous cyber defense. The framework integrates causal modeling, constrained decision processes, dual-objective multi-agent reinforcement learning with asymmetric reward shaping, and human-in-the-loop oversight into a unified pipeline. Its design objective is to ensure that every autonomous intervention is (i) causally grounded, (ii) restricted to formally admissible investigation trajectories, and (iii) accompanied by machine-generated explanations suitable for operational auditing.\\

Section 5 presents C-MADF in a coherent execution sequence rather than as isolated modules. It begins with Causal Discovery (Subsection~\ref{subsec:causal_discovery}), followed by the Causally-Constrained MDP-DAG Roadmap (Subsection~\ref{subsec:mdp_dag}), then the Council of Rivals deliberation (Subsection~\ref{subsec:council_of_rivals}), and finally the Explainable HITL interface (Subsection~\ref{subsec:hitl_interface}). Subsection~\ref{subsec:overall_algorithm} unifies these components into a complete end-to-end incident-response loop.


Table~\ref{tab:cmadf_components} presents the module inputs, outputs, and guarantees, while Figure~\ref{fig:cmadf_architecture} illustrates how these modules are connected in the pipeline. The surrounding text focuses on how outputs from each module become inputs to the next stage.

\begin{table}[t]
\centering
\caption{C-MADF Architectural Modules and Formal Guarantees}
\label{tab:cmadf_components}
\begin{tabularx}{\columnwidth}{@{} p{0.22\columnwidth} X X @{}}
\toprule
\textbf{Module} & \textbf{Inputs / Outputs} & \textbf{Formal Guarantee} \\ 
\midrule

Causal Discovery Engine 
& \textbf{In:} historical telemetry and event logs $\mathcal{D}$; 
\textbf{Out:} weighted causal DAG $\dagG=(V,E)$ with edge confidence scores. 
& \textit{Causal identifiability:} filters spurious statistical dependencies and enables intervention- and counterfactual-consistent reasoning under structural constraints. \\

Causally-Constrained MDP (MDP-DAG) 
& \textbf{In:} $\dagG$ and investigation ontology; 
\textbf{Out:} constrained MDP $\mathcal{M}=(S,A,P,R)$ over admissible roadmap states and actions. 
& \textit{Semantic soundness:} restricts policy exploration to transitions consistent with causal structure and operational semantics, ensuring verifiable action admissibility. \\

Council of Rivals Deliberation 
& \textbf{In:} current state $s_t$ and roadmap constraints; 
\textbf{Out:} adversarially validated action $\hat{a}_t$ and policy divergence score $\mathcal{D}(s_t)$. 
& \textit{Robust decision validation:} dual-policy asymmetric self-play with objective divergence reduces reasoning drift, mitigates overconfidence, and exposes epistemic uncertainty via calibrated disagreement. \\

Explainable Human-in-the-Loop (HITL) Interface 
& \textbf{In:} deliberation artifacts (evidence traces, counterarguments, $\mathcal{D}(s_t)$); 
\textbf{Out:} structured explanations, Explanation Transparency Score (ETS$(s_t)$), and escalation triggers. 
& \textit{Operational transparency:} produces operator-aligned rationales with quantifiable reliability calibration and principled escalation under high uncertainty. \\

\bottomrule
\end{tabularx}
\end{table}

\begin{figure*}[h!]
    \centering
    \includegraphics[width=0.9\textwidth]{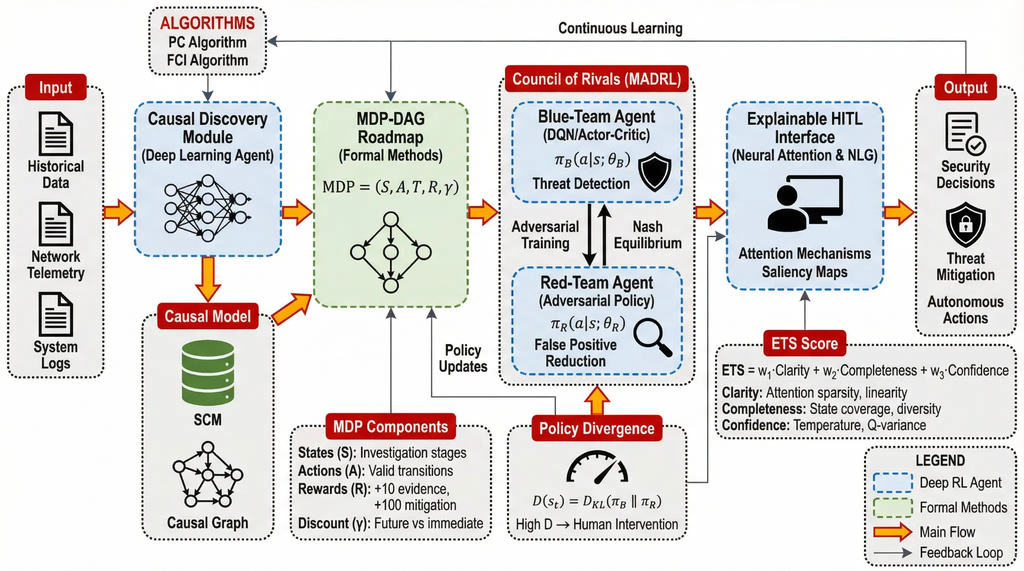}
    \caption{The architecture of the Causal Multi-Agent Decision Framework (C-MADF). The process begins with the Causal Discovery Module learning a causal model from data. This model informs the MDP-DAG Roadmap, which provides a verifiable structure for investigations. The Council of Rivals, consisting of a Blue-Team and a Red-Team agent, deliberates on the best course of action within this roadmap. Their debate and the resulting policy divergence are fed into the Explainable Human-in-the-Loop Interface, which computes the ETS and presents a clear recommendation to the supervisory adjudication.}
    \label{fig:cmadf_architecture}
\end{figure*}

\subsection{Causal Discovery Module}
\label{subsec:causal_discovery}

C-MADF is grounded in an explicit SCM of the monitored environment. Unlike correlation-based analytics pipelines, which may confound co-occurrence with causation, this module seeks to recover directed causal dependencies among security-relevant variables.

Let $\mathcal{V}$ denote the set of observed variables extracted from host- and network-level telemetry, including process metadata, system calls, authentication events, and flow-level features. The Causal Discovery Module applies constraint-based causal structure learning algorithms (e.g., PC or FCI) to estimate a DAG $\dagG = (V, E)$ over $\mathcal{V}$. Edge existence and orientation are inferred via conditional independence testing under standard causal sufficiency or partial observability assumptions.

The learned DAG is interpreted as the graphical representation of an SCM $M = \langle U, V, f \rangle$, where $U$ denotes exogenous variables and $f$ specifies structural equations governing endogenous variables $V$. The SCM enables intervention reasoning through $do(\cdot)$ operators and supports counterfactual queries over alternative action sequences.\\

To accommodate evolving infrastructure and adversarial tactics, the causal model is periodically re-estimated or incrementally updated under drift detection triggers. This ensures that structural dependencies reflect current operational conditions rather than stale historical patterns.

The SCM contributes three tightly coupled capabilities: causal filtering to suppress spurious correlations from coincidental co-occurrence or adversarial noise, intervention reasoning to evaluate the downstream effects of containment or blocking actions, and counterfactual analysis to compare alternative response strategies during post-incident auditing.

Algorithm~\ref{alg:causal_discovery} abstracts a family of constraint-based methods (such as PC/FCI). In practice, conditional independence tests are instantiated using suitable tests for mixed data types (e.g., partial correlation, kernel-based tests), and the algorithm is periodically re-run or incrementally updated to track concept drift in the security environment.

\begin{algorithm}[t!]
\caption{Constraint-Based Causal Discovery from Telemetry (PC Skeleton + Meek Orientation)}
\label{alg:causal_discovery}
\begin{algorithmic}[1]
\REQUIRE Historical telemetry dataset $\mathcal{D}$ over variables $V$; conditional independence test $\textsc{CI}(\cdot)$; significance level $\alpha$; maximum conditioning size $L_{\max}$; (optional) background knowledge $\mathcal{K}$.
\ENSURE Estimated causal DAG $\widehat{\mathcal{G}}=(V,\widehat{E})$.

\STATE \textbf{Preprocess} $\mathcal{D}$ (time alignment, missing-value handling, normalization/aggregation); construct variable set $V$.
\STATE Initialize an undirected complete graph $G \leftarrow (V,E)$ where $E=\{\{X,Y\}\mid X\neq Y,\ X,Y\in V\}$.
\STATE Initialize separating sets $\text{Sep}(X,Y)\leftarrow \emptyset$ for all unordered pairs $(X,Y)$.

\STATE $l \leftarrow 0$ \COMMENT{conditioning-set size}
\WHILE{$l \le L_{\max}$}
    \STATE $E_{\text{changed}} \leftarrow \textbf{false}$
    \FORALL{adjacent pairs $(X,Y)$ in $G$}
        \STATE $\text{Adj}(X)\leftarrow \{W \in V \mid \{X,W\}\in E\}$
        \IF{$|\text{Adj}(X)\setminus\{Y\}| < l$}
            \STATE \textbf{continue}
        \ENDIF
        \FORALL{$Z \subseteq \text{Adj}(X)\setminus\{Y\}$ with $|Z|=l$}
            \IF{$\textsc{CI}(X,Y \mid Z;\mathcal{D},\alpha)=\textbf{independent}$}
                \STATE Remove edge $\{X,Y\}$ from $G$.
                \STATE $\text{Sep}(X,Y)\leftarrow Z$; $\text{Sep}(Y,X)\leftarrow Z$.
                \STATE $E_{\text{changed}} \leftarrow \textbf{true}$
                \STATE \textbf{break} \COMMENT{no need to test larger $Z$ once separated}
            \ENDIF
        \ENDFOR
    \ENDFOR
    \IF{$E_{\text{changed}}=\textbf{false}$}
        \STATE \textbf{break}
    \ENDIF
    \STATE $l \leftarrow l + 1$
\ENDWHILE

\STATE \textbf{Initialize orientations:} replace each undirected edge $\{X,Y\}$ with $X{-}Y$.
\STATE \textbf{Collider identification (v-structures):}
\FORALL{triples $(X,Z,Y)$ such that $X{-}Z$ and $Y{-}Z$ and $X$ and $Y$ are non-adjacent in $G$}
    \IF{$Z \notin \text{Sep}(X,Y)$}
        \STATE Orient $X \rightarrow Z \leftarrow Y$.
    \ENDIF
\ENDFOR

\STATE \textbf{Apply orientation propagation rules} (Meek's rules) to orient remaining edges while enforcing acyclicity and consistency with $\mathcal{K}$.
\STATE \textbf{return} $\widehat{\mathcal{G}}$
\end{algorithmic}
\end{algorithm}

The specific implementation choices for this module are detailed in Table~\ref{tab:causal_discovery_config}. 
By specifying the discovery paradigm (PC/FCI family) and significance thresholds, 
the table provides the technical rationale for how the system balances structural 
accuracy with computational complexity. This configuration is vital for ensuring 
the learned DAG represents a stable approximation of the underlying security environment.

\begin{table}[t]
\centering
\caption{Causal Discovery Configuration and Design Choices}
\label{tab:causal_discovery_config}
\begin{tabularx}{\columnwidth}{@{} l X @{}}
\toprule
\textbf{Component} & \textbf{Specification and Rationale} \\ \midrule

Discovery paradigm &
Constraint-based causal structure learning (PC/FCI family), selected for its ability to produce an explicit, interpretable DAG consistent with observed conditional-independence relations in telemetry data. FCI is employed when latent confounding cannot be excluded. \\

Conditional independence testing &
Hybrid CI testing pipeline: partial correlation tests for approximately Gaussian continuous variables; kernel-based or mutual-information–based tests for mixed or non-Gaussian variables. Tests are selected based on feature-type metadata and validated via hold-out consistency checks. \\

Significance threshold &
Statistical significance level $\alpha \in [0.01, 0.05]$, selected through validation to balance false edge removal (Type I error) and spurious dependency retention (Type II error). Sensitivity analysis is performed across this range. \\

Maximum conditioning set size &
Conditioning set size bounded (e.g., $|Z| \le 2$--$4$) to control computational complexity and reduce instability in high-dimensional telemetry. This bounds worst-case complexity of independence testing and mitigates overfitting. \\

Edge orientation constraints &
Standard v-structure orientation and acyclicity enforcement, optionally augmented with domain priors (e.g., host process creation precedes outbound network flow). Prior constraints are encoded as forbidden or required edges. \\

Concept drift handling &
Periodic offline re-estimation or incremental updates triggered when distributional shift exceeds a predefined divergence threshold (e.g., population stability index or KL divergence). Drift detection prevents structural staleness. \\

Outputs &
Learned causal DAG $\dagG = (V,E)$ with associated edge confidence scores. Edge confidence values are propagated to (i) constrain admissible transitions in the MDP-DAG roadmap and (ii) modulate epistemic uncertainty in ETS computation. \\

\bottomrule
\end{tabularx}
\end{table}

\subsection{Causally-Constrained MDP-DAG Roadmap}
\label{subsec:mdp_dag}

The SCM learned in Section~\ref{subsec:causal_discovery} is compiled into an investigation-level roadmap that governs admissible response trajectories. The roadmap is distinct from the variable-level causal DAG: the SCM encodes dependencies among telemetry variables, whereas the roadmap DAG encodes permissible transitions among abstract investigation states. The graphical representation can be found in  Figure \ref{fig:mdp_dag_roadmap}.

The formal constrained MDP tuple and symbol definitions are introduced in Section 3 (Preliminaries) and are not repeated here. In this subsection, we focus on how the learned SCM instantiates that formulation through a roadmap-level action-admissibility structure.

Subsection~\ref{subsec:causal_discovery} provides the learned causal graph consumed by this stage. The output of this stage is a state-conditioned admissible action set used directly by policy optimization and online execution gating.

Let $\mathcal{G}_{\text{road}} = (\mathcal{S}, E_{\text{road}})$ denote the investigation roadmap DAG. 
Each edge $(s \rightarrow s') \in E_{\text{road}}$ is associated with a confidence score 
$c(s,s') \in [0,1]$ derived from the underlying causal discovery procedure.

The admissible action set at state $s$ is defined as:
\[
\mathcal{A}(s) = \{ a \in \mathcal{A} \mid (s \rightarrow s') \in E_{\text{road}} \}.
\]

In practice, we distinguish between high-confidence and low-confidence transitions. 
Transitions whose confidence exceeds a threshold $\tau_c$ are treated as \emph{hard structural constraints}, whereas lower-confidence transitions are treated as \emph{soft constraints}. 
Soft-constrained transitions are not permanently eliminated; instead, they trigger elevated epistemic monitoring and potential human escalation rather than structural prohibition.

Edges in $\mathcal{G}_{\text{road}}$ are derived by projecting intervention-consistent transitions from the learned SCM. 
This construction prioritizes transitions that respect prerequisite evidence-gathering stages implied by the causal structure, while allowing calibrated fallback escalation when structural uncertainty is high.

The reward function $\mathcal{R}(s,a)$ encodes operational objectives through a weighted aggregation of mitigation success, disruption cost, resource expenditure, and evidential utility. Weight calibration is deployment-specific and performed via validation on historical incident traces.

\begin{figure*}[h!]
    \centering
    \includegraphics[width=\textwidth]{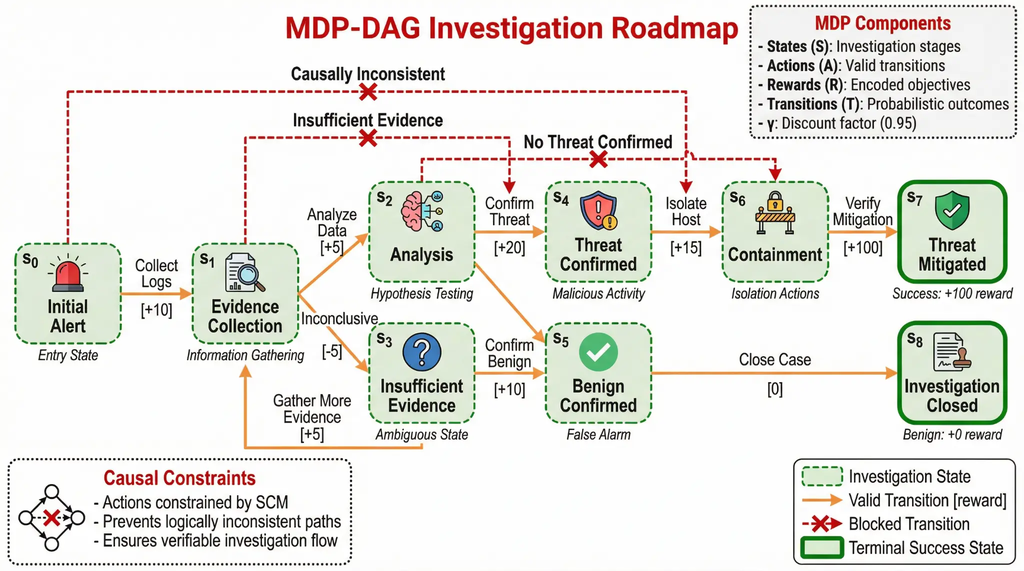}
    \caption{The MDP-DAG Investigation Roadmap showing valid state transitions (orange arrows with reward values) and causally inconsistent blocked transitions (red X). The roadmap constrains the action space to ensure verifiable investigation flows, preventing logically inconsistent paths such as jumping directly from Initial Alert to Threat Mitigated without sufficient evidence collection.}
    \label{fig:mdp_dag_roadmap}
\end{figure*}

\begin{algorithm}[t]
\caption{MDP-DAG Construction from a Learned SCM}
\label{alg:mdp_dag_construction}
\begin{algorithmic}[1]
\STATE \textbf{Input:} Learned SCM $M = \langle U, V, f \rangle$ with causal DAG $\dagG = (V,E)$; investigation ontology $\mathcal{O}$
\STATE \textbf{Output:} Constrained MDP $\mathcal{M} = (\mathcal{S}, \mathcal{A}, \mathcal{P}, \mathcal{R}, \gamma)$

\vspace{0.3em}
\STATE \textbf{Step 1: State Space Construction}
\STATE Define investigation states $\mathcal{S}$ as abstract stages derived from ontology $\mathcal{O}$ (e.g., \emph{Alert}, \emph{Evidence Gathering}, \emph{Containment}, \emph{Resolution}).
\STATE Each state $s \in \mathcal{S}$ is associated with a subset of observed variables $V_s \subseteq V$ required to justify that stage.

\vspace{0.3em}
\STATE \textbf{Step 2: Admissible Transition Identification}
\STATE Initialize empty roadmap DAG $\mathcal{G}_{\text{road}} = (\mathcal{S}, E_{\text{road}})$.
\FOR{each ordered pair $(s_i, s_j) \in \mathcal{S} \times \mathcal{S}$}
    \IF{$s_j$ represents a valid semantic progression from $s_i$ under ontology $\mathcal{O}$}
        \IF{all required causal parents of $V_{s_j}$ are satisfied along $\dagG$}
            \STATE Add edge $(s_i, s_j)$ to $E_{\text{road}}$.
            \STATE Define action $a_{ij}$ corresponding to transition $(s_i \rightarrow s_j)$.
        \ENDIF
    \ENDIF
\ENDFOR

\vspace{0.3em}
\STATE \textbf{Step 3: Transition and Reward Specification}
\STATE Define $\mathcal{A}$ as the set of all admissible transition actions $a_{ij}$.
\STATE Estimate transition probabilities $\mathcal{P}(s' \mid s, a)$ from historical incident traces or benchmark telemetry.
\STATE Define reward function $\mathcal{R}(s,a)$ as weighted aggregation of mitigation success, disruption cost, resource usage, and evidential gain.

\vspace{0.3em}
\STATE Select discount factor $\gamma \in (0,1)$ to balance mitigation urgency and investigation thoroughness.

\STATE \textbf{return} $\mathcal{M} = (\mathcal{S}, \mathcal{A}, \mathcal{P}, \mathcal{R}, \gamma)$
\end{algorithmic}
\end{algorithm}

\subsection{Adversarial Multi-Agent Deliberation}
\label{subsec:council_of_rivals}

Given the admissible action set produced by the roadmap stage, decision-making is performed by a two-agent multi-agent reinforcement learning (MARL) architecture that introduces structured counterfactual scrutiny into autonomous response selection\footnote{We note that the Blue-Team and Red-Team policies do not constitute a classical zero-sum adversarial game. Rather, they represent asymmetric objective functions defined over a shared constrained MDP. The Red-Team does not model the external attacker; instead, it serves as a conservative counterfactual policy regularizer that penalizes premature mitigation under evidentiary ambiguity.}.

Let $\pi_B(a \mid s; \theta_B)$ denote the Blue-Team policy and $\pi_R(a \mid s; \theta_R)$ the Red-Team policy. Both policies operate over the admissible action set $\mathcal{A}(s)$ defined by the roadmap DAG (Section~\ref{subsec:mdp_dag}). The interaction between the two policies is modeled as a stochastic game defined over the same state space $\mathcal{S}$ and constrained transition structure.\\

\noindent \textbf{Blue-Team Policy.}
The Blue-Team policy is optimized to maximize cumulative mitigation utility under the reward function $\mathcal{R}_B$, which prioritizes successful threat neutralization, evidentiary progression, and investigation efficiency subject to the admissibility constraints. The policy may be instantiated using deep Q-learning, actor–critic methods, or other function-approximation schemes compatible with constrained action spaces.

\noindent \textbf{Red-Team Policy.}
The Red-Team policy is optimized under a conservatively shaped reward $\mathcal{R}_R$ that penalizes unjustified or high-impact interventions lacking sufficient evidentiary support. Its objective is to minimize false positives and operational disruption. Unlike classical adversarial learning that perturbs inputs, the Red-Team operates at the decision-process level by proposing alternative admissible actions within $\mathcal{A}(s)$.\\

\noindent \textbf{Game-Theoretic Interaction.}
At each state $s_t$, both agents evaluate the admissible action set:
\[
a_B \sim \pi_B(\cdot \mid s_t), \quad
a_R \sim \pi_R(\cdot \mid s_t).
\]
The resulting joint evaluation constitutes a stage of a two-player stochastic game with aligned state dynamics but partially opposing objectives. Training may be conducted via self-play or alternating optimization until convergence to a stable policy pair.\\

The interaction between these two agents is visualized in Figure~\ref{fig:council_of_rivals}. 
The figure shows the iterative process of hypothesis evaluation and action arbitration. 
By forcing the Blue-Team to justify its proposals against a skeptical Red-Team, 
the framework creates a robust decision-validation loop that exposes epistemic 
uncertainty as a measurable divergence score.\\

\noindent \textbf{Policy Divergence as Epistemic Signal.}
To quantify disagreement between agents, we define a Policy Divergence Score:
\begin{equation}
\mathcal{D}(s_t) = D\bigl(\pi_B(\cdot \mid s_t), \pi_R(\cdot \mid s_t)\bigr),
\end{equation}
where $D(\cdot,\cdot)$ denotes a distributional divergence metric (e.g., KL divergence or total variation distance). For value-based implementations, divergence may alternatively be defined as
\begin{equation}
\mathcal{D}(s_t) = \left\| Q_B(s_t, \cdot) - Q_R(s_t, \cdot) \right\|_2.
\end{equation}

Large divergence indicates disagreement in expected utility estimates under competing objectives and is interpreted as elevated epistemic uncertainty. When $\mathcal{D}(s_t)$ exceeds a predefined threshold, autonomous execution is suspended and the decision is escalated to the human-in-the-loop interface.\\

\noindent \textbf{Structural Role.}
Because both agents operate within the same causally constrained action space, structured counterfactual scrutiny cannot introduce logically inconsistent transitions. Instead, disagreement is confined to admissible investigation paths, ensuring that robustness is achieved at the policy-selection level without violating structural constraints.

This adversarial deliberation mechanism mitigates unilateral commitment to high-impact actions and reduces susceptibility to telemetry manipulation by requiring cross-policy consistency prior to execution.\\

\begin{table}[t]
\centering
\caption{Reward Shaping Design for the Council of Rivals}
\label{tab:reward_design}
\begin{tabularx}{\columnwidth}{@{} >{\hsize=0.7\hsize}X >{\hsize=1.15\hsize}X >{\hsize=1.15\hsize}X @{}}
\toprule
\textbf{Reward Component} & \textbf{Blue-Team Policy ($\pi_B$)} & \textbf{Red-Team Policy ($\pi_R$)} \\ \midrule

Verified mitigation 
& Large positive reward for reaching causally justified terminal states (e.g., ``Threat Mitigated'') after traversing required evidence states. 
& Small positive reward only if mitigation is supported by sufficient evidence; no reward for premature containment. \\

False-positive intervention 
& Penalty proportional to operational disruption cost (e.g., unnecessary isolation, service interruption). 
& Large positive reward for preventing or opposing unjustified disruptive actions; penalty if false positives are not challenged. \\

Evidence acquisition 
& Positive reward for actions that increase information gain or reduce causal uncertainty along the learned DAG. 
& Positive reward for demanding additional evidence under high epistemic uncertainty; penalty for accepting unsupported mitigation claims. \\

Investigation efficiency 
& Small step penalty to discourage unnecessary actions and encourage timely resolution. 
& Smaller penalty weight relative to Blue-Team, prioritizing verification over speed in ambiguous states. \\

Roadmap compliance 
& Penalty for proposing actions inconsistent with MDP-DAG constraints. 
& Equivalent penalty for violating causal or roadmap admissibility constraints. \\

Human escalation 
& Neutral or small penalty when escalation is triggered under high divergence $\mathcal{D}(s_t)$. 
& Positive reward when escalation is correctly triggered under high epistemic uncertainty. \\

\bottomrule
\end{tabularx}
\end{table}

\begin{figure*}[h!]
    \centering
    \includegraphics[width=0.9\textwidth]{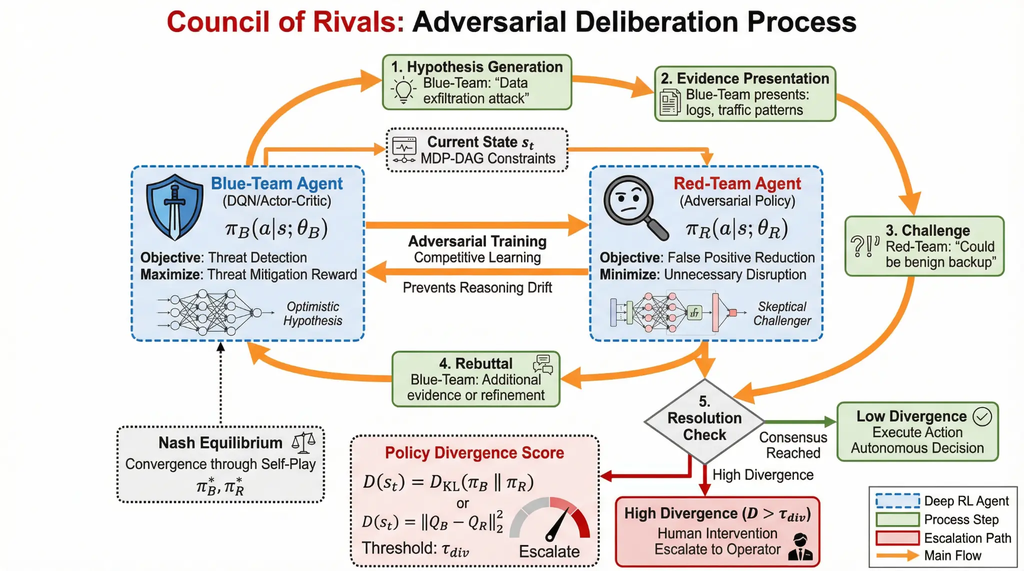}
    \caption{Council of Rivals adversarial deliberation architecture.
    At each investigation state $s_t$ in the causally constrained MDP-DAG, the Blue-Team policy proposes a mitigation action based on its learned value estimates, while the Red-Team policy evaluates and challenges the proposal under a conservatively shaped objective. The interaction proceeds through iterative hypothesis evaluation, evidential justification, and action arbitration within the constrained action space. Inter-policy disagreement is quantified via the Policy Divergence Score $\mathcal{D}(s_t)$, which serves as an epistemic uncertainty signal and triggers human escalation when $\mathcal{D}(s_t) > \tau_{\text{div}}$. The policies are trained via self-play in a two-player stochastic game defined over the MDP-DAG, yielding stabilized adversarial decision policies.}
    \label{fig:council_of_rivals}
\end{figure*}

\begin{algorithm}[t]
\caption{Council of Rivals Self-Play Training}
\label{alg:council_training}
\begin{algorithmic}[1]
\STATE \textbf{Input:} Constrained MDP-DAG $\mathcal{M} = (\mathcal{S},\mathcal{A},\mathcal{P},\mathcal{R},\gamma)$
\STATE \textbf{Initialize:} Blue policy $\pi_B(\cdot;\theta_B)$, Red policy $\pi_R(\cdot;\theta_R)$, target networks $\hat{\theta}_B, \hat{\theta}_R$
\STATE Initialize replay buffers $\mathcal{B}_B, \mathcal{B}_R$

\FOR{each training episode}
    \STATE Sample initial investigation state $s_0 \sim \rho_0$
    \FOR{$t = 0$ to $T-1$}
        \STATE Apply action mask $\mathcal{A}_{\text{valid}}(s_t)$ from MDP-DAG constraints
        \STATE Blue proposes $a_B \sim \pi_B(\cdot \mid s_t; \theta_B)$
        \STATE Red proposes critique/alternative $a_R \sim \pi_R(\cdot \mid s_t; \theta_R)$
        \STATE Resolve executed action:
        \[
            a_t \leftarrow \mathrm{Arbitrate}(a_B, a_R, s_t)
        \]
        \STATE Execute $a_t$, observe $s_{t+1}$ and rewards $(r_t^B, r_t^R)$
        \STATE Store $(s_t, a_B, a_R, a_t, r_t^B, r_t^R, s_{t+1})$ in buffers
        \STATE Sample mini-batch from $\mathcal{B}_B$ and update $\theta_B$ via Double DQN:
        \[
        y = r_t^B + \gamma Q_{\hat{\theta}_B}(s_{t+1}, 
        \arg\max_a Q_{\theta_B}(s_{t+1},a))
        \]
        \STATE Update $\theta_R$ analogously using $r_t^R$
        \STATE Update target networks via Polyak averaging:
        \[
        \hat{\theta} \leftarrow \tau \theta + (1-\tau)\hat{\theta}
        \]
    \ENDFOR
\ENDFOR

\STATE \textbf{return} $\pi_B^\star, \pi_R^\star$
\end{algorithmic}
\end{algorithm}

Algorithm~\ref{alg:council_training} summarizes the self-play training procedure. The reward structure may be zero-sum, general-sum, or shaped to encourage calibrated skepticism and robust consensus, depending on deployment requirements.\\

\subsubsection{Policy Architecture and Training Configuration}
\label{subsec:policy_architecture}

The Blue-Team and Red-Team policies share a common function-approximation backbone and optimization framework. Their behavioral divergence arises exclusively from reward shaping and objective weighting (Table~\ref{tab:reward_design}), ensuring that observed differences in action selection are attributable to decision criteria rather than architectural asymmetry.\\

\noindent \textbf{State Representation.} Each investigation state $s_t \in \mathcal{S}$ is mapped to a fixed-dimensional embedding $x_t \in \mathbb{R}^d$. The feature vector aggregates three components:

\begin{itemize}
    \item \textbf{Host-level statistics} aggregated over a sliding window of length $K_h$, including process-level activity counts, resource utilization metrics, and security event frequencies;
    \item \textbf{Network-level statistics} aggregated over window $K_n$, including flow volumes, endpoint diversity, protocol distribution, and connection dynamics;
    \item \textbf{Investigation context features}, including current roadmap node, evidence flags, mitigation history, and escalation indicators encoded as binary or multi-hot vectors.
\end{itemize}

The resulting representation is
\[
x_t = [h_t \,\|\, n_t \,\|\, c_t],
\]
with per-feature standardization performed using parameters estimated from a held-out calibration dataset. This abstraction assumes that the investigation state is fully observable at the roadmap level, with partial observability in raw telemetry absorbed into feature aggregation.\\

\noindent \textbf{Action Space and Masking.} The discrete action space $\mathcal{A}$ comprises 18 high-level investigation and mitigation actions aligned with the roadmap DAG. At state $s_t$, admissible actions are restricted to $\mathcal{A}(s_t)$ via action masking consistent with the structural constraints defined in Section~\ref{subsec:mdp_dag}. Invalid transitions are assigned zero probability prior to policy evaluation, ensuring that learning occurs exclusively within the causally admissible state–action subspace.\\

\noindent \textbf{Function Approximation.} Both agents employ a three-layer multilayer perceptron (MLP) with ReLU activations and layer normalization:

\[
\phi(s_t) = \mathrm{LayerNorm}\left(
\sigma\left(W_3 \sigma\left(W_2 \sigma\left(W_1 x_t + b_1\right) + b_2\right) + b_3\right)
\right),
\]
where $\sigma(\cdot)$ denotes the ReLU activation function.

For value-based training, the action-value function is defined as
\[
Q(s_t, a; \theta) = f_Q(\phi(s_t), a),
\]
where $f_Q$ denotes a linear or bilinear projection from the shared representation to action scores. In actor–critic configurations, the shared backbone $\phi(\cdot)$ feeds separate heads for policy logits and state-value estimation.

Parameter sharing across agents is limited to architectural structure; weights $\theta_B$ and $\theta_R$ are trained independently.\\

\noindent \textbf{Training Procedure and Stability Mechanisms.} Training is conducted using off-policy deep Q-learning with target networks and prioritized experience replay. Each agent maintains its own replay buffer to preserve objective asymmetry. To enhance stability, we employ:

\begin{itemize}
    \item target-network updates via Polyak averaging with coefficient $\tau = 0.005$;
    \item Double DQN updates to mitigate overestimation bias;
    \item $\ell_2$ gradient clipping with threshold 5.0;
    \item $\varepsilon$-greedy exploration annealed from 1.0 to 0.05 over $3 \times 10^5$ environment steps.
\end{itemize}

Training convergence is monitored using (i) exponentially smoothed episodic return and (ii) validation-set false-positive rate under adversarial perturbation. Early stopping is triggered when both metrics stabilize over 20 consecutive evaluation intervals.

All hyperparameters are fixed across experimental conditions unless otherwise specified in Section~\ref{sec:evaluation}.














\begin{table}[t]
\centering
\caption{Council of Rivals Training Configuration}
\label{tab:training_hparams}
\begin{tabularx}{\columnwidth}{@{} >{\hsize=0.9\hsize}X >{\hsize=0.7\hsize}X >{\hsize=1.4\hsize}X @{}}
\toprule
\textbf{Parameter} & \textbf{Value} & \textbf{Specification} \\ \midrule

State dimension $d$ 
& 256 
& Resulting dimensionality after concatenation and linear projection of host, network, and investigation-context features. \\

Hidden layer sizes 
& [256, 256, 128] 
& Shared multilayer perceptron backbone with ReLU activations and Layer Normalization. \\

Optimizer 
& Adam 
& $\beta_1=0.9$, $\beta_2=0.999$, default $\epsilon=10^{-8}$. \\

Learning rate 
& $3\times10^{-4}$ 
& Selected via validation-based hyperparameter search. \\

Batch size 
& 128 
& Mini-batches sampled from prioritized replay buffer. \\

Replay buffer capacity 
& $5\times10^5$ transitions 
& Per-agent buffer with proportional prioritized sampling. \\

Discount factor $\gamma$ 
& 0.99 
& Supports long-horizon investigation planning. \\

Target update coefficient $\tau$ 
& 0.005 
& Polyak averaging for stabilizing target networks. \\

Exploration schedule 
& $\varepsilon$-greedy: $1.0 \rightarrow 0.05$ 
& Linear decay over $3\times10^5$ environment steps. \\

Total training steps 
& $1.5\times10^6$ 
& Per agent across all training scenarios. \\

Random seeds 
& \{1,2,3\} 
& Results reported as mean across three independent runs. \\

\bottomrule
\end{tabularx}
\end{table}

\begin{table}[t]
\centering
\caption{Decomposition of the ETS}
\label{tab:ets_components}
\begin{tabularx}{\columnwidth}{@{} l X @{}}
\toprule
\textbf{Component} & \textbf{Operational Definition} \\ \midrule
Clarity 
& Concentration and stability of explanatory signals, quantified via (i) entropy of feature-level attention distributions, (ii) local linear fidelity of the Blue-Team value function $Q_B$, and (iii) stability of action rankings under bounded state perturbations. \\

Completeness 
& Evidential sufficiency and coverage, measured by (i) replay-buffer state density in a neighborhood of $s_t$, (ii) entropy of empirically observed action distributions in similar historical states, and (iii) diversity of feature-attribution contributions. \\

Confidence 
& Epistemic certainty of the recommended action $a^\star$, estimated through (i) inverse effective policy temperature $\tau_{\pi_B}$, (ii) bootstrap variance of $Q_B(s_t, a^\star)$, and (iii) a monotonic transformation of the inter-agent Policy Divergence Score $\mathcal{D}(s_t)$. \\

Aggregation 
& Weighted combination 
$\text{ETS}(s_t) = w_1 \cdot \text{Clarity} + w_2 \cdot \text{Completeness} + w_3 \cdot \text{Confidence}$, followed by bounded normalization to $[0,1]$ for use in escalation gating. \\

\bottomrule
\end{tabularx}
\end{table}

\begin{figure*}[h!]
    \centering
    \includegraphics[width=0.9\textwidth]{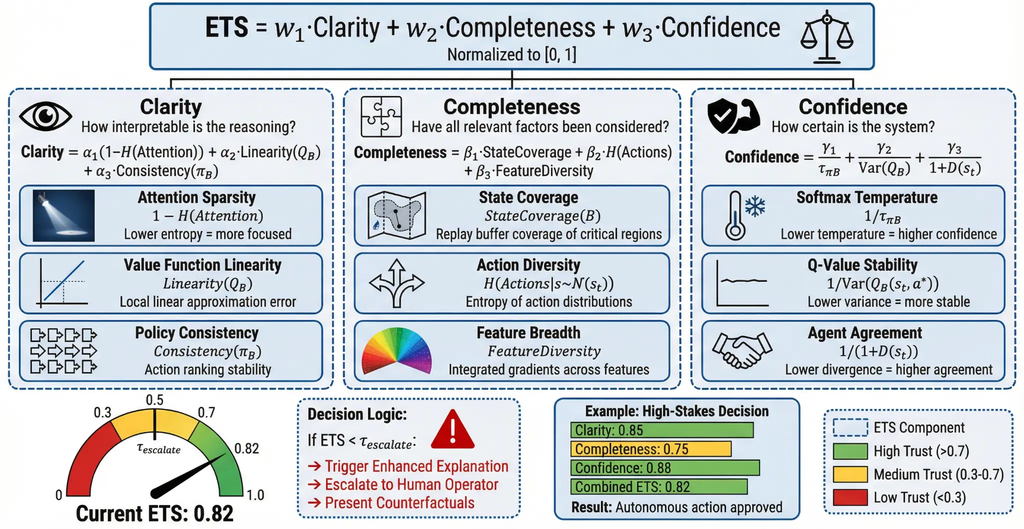}
\caption{Illustrative decomposition of the ETS into its three primary components, Clarity, Completeness, and Confidence, and their respective sub-metrics. The figure visualizes the aggregation structure used to compute ETS$(s_t)$ and demonstrates how the resulting score is compared against the escalation threshold $\tau_{\text{escalate}}$. In this example, ETS$(s_t)=0.82$ exceeds the calibrated threshold, and the system proceeds without mandatory human escalation. When ETS$(s_t)$ falls below the threshold, structured explanation artifacts and human oversight are triggered.}
    \label{fig:ets_breakdown}
\end{figure*}

\subsection{Explainable Human-in-the-Loop Interface}
\label{subsec:hitl_interface}

C-MADF exposes its internal deliberation process through a structured human-in-the-loop (HITL) interface designed for supervisory control and post hoc auditability. Rather than presenting raw network outputs, this interface consolidates four decision artifacts in one view: competing admissible actions from $\pi_B$ and $\pi_R$, supporting/contradicting evidential features, inter-policy divergence $\mathcal{D}(s_t)$, and projected operational cost and mitigation impact for candidate actions.

The hierarchical structure of the ETS is decomposed into its constituent metrics in 
Table~\ref{tab:ets_components}, while Figure~\ref{fig:ets_breakdown} illustrates the aggregation logic. These visualizations explain how raw feature attributions and 
policy variances are transformed into a single, trust-calibrated indicator. This 
aggregation is essential for providing operators with a clear "go/no-go" signal 
during high-pressure incident response.

These artifacts are computed from policy distributions, value-function attributions, and structural constraints imposed by the MDP-DAG. The HITL layer does not modify policy learning; instead, it governs execution and escalation under uncertainty.\\

\noindent \textit{Explainability-Transparency Score (ETS).} To provide a bounded escalation signal, we define the ETS, a scalar meta-indicator aggregating explanation structure and epistemic certainty at state $s_t$.

\begin{equation}
\begin{aligned}
\text{ETS}(s_t) = &\, w_1 \,\text{Clarity}(s_t) \\
                  &+ w_2 \,\text{Completeness}(s_t) \\
                  &+ w_3 \,\text{Confidence}(s_t),
\end{aligned}
\end{equation}
where $w_i \ge 0$ and $\sum_i w_i = 1$. All sub-scores are normalized to $[0,1]$, ensuring $\text{ETS}(s_t) \in [0,1]$.

ETS functions as a calibrated decision-gating signal reflecting system-internal decision reliability rather than subjective human trust. Its parameters are selected to reflect operational interpretability and are empirically tuned using validation scenarios (Section~\ref{sec:evaluation}).\\

\noindent \textit{Clarity.} Clarity quantifies concentration and stability of explanatory signals associated with the recommended action $a^\star$.

Let $\alpha_t$ denote the normalized feature-attribution distribution for state $s_t$. Clarity incorporates:

\begin{itemize}
    \item entropy of feature attributions $H(\alpha_t)$,
    \item local linear fidelity of $Q_B$ under first-order approximation,
    \item ranking stability of top actions under small perturbations of $x_t$.
\end{itemize}

A representative formulation is:

\begin{equation}
\begin{aligned}
\text{Clarity}(s_t)
&= \alpha_1 \bigl(1 - \tfrac{H(\alpha_t)}{\log d}\bigr)
+ \alpha_2 \,\text{LocalFidelity}(s_t) \\
&\quad + \alpha_3 \,\text{RankingStability}(s_t),
\end{aligned}
\end{equation}
with $\alpha_i \ge 0$ and $\sum_i \alpha_i = 1$.\\

\noindent \textit{Completeness.} Completeness measures evidential coverage and behavioral diversity in the vicinity of $s_t$.

Let $\mathcal{B}$ denote the replay buffer and $\mathcal{N}_\epsilon(s_t)$ a local perturbation neighborhood defined via bounded feature noise $\|x - x_t\|_2 \le \epsilon$.

\begin{equation}
\begin{aligned}
\text{Completeness}(s_t)
&= \beta_1 \,\text{StateDensity}_{\mathcal{B}}(s_t)
+ \beta_2 \,\tfrac{H\bigl(\pi_B(\cdot \mid \mathcal{N}_\epsilon(s_t))\bigr)}{\log |\mathcal{A}|} \\
&\quad + \beta_3 \,\text{AttributionDiversity}(s_t),
\end{aligned}
\end{equation}
with $\beta_i \ge 0$ and $\sum_i \beta_i = 1$.\\

\noindent \textit{Confidence.} Confidence captures epistemic certainty in the recommended action $a^\star$.

\begin{equation}
\begin{aligned}
\text{Confidence}(s_t)
&= \gamma_1 \,\text{SoftmaxMargin}(s_t) \\
+ \gamma_2 \,\exp\!\bigl(-\text{Var}(Q_B(s_t,a^\star))\bigr) 
&\quad + \gamma_3 \,\exp\!\bigl(-\mathcal{D}(s_t)\bigr),
\end{aligned}
\end{equation}
with $\gamma_i \ge 0$ and $\sum_i \gamma_i = 1$.\\

\noindent \textit{Calibration.}
ETS parameters are calibrated using scenario-level correctness labels and evidentiary sufficiency heuristics derived from benchmark ground-truth annotations.

Each calibration scenario $s_j$ is associated with an oracle reliability target
$R_j \in [0,1]$, constructed from objective criteria including
(i) correctness of the recommended action with respect to ground-truth attack state,
(ii) causal trace consistency within the SCM,
and (iii) stability of policy outputs under bounded perturbations.

Calibration solves the constrained optimization problem:
\[
\min_{w} \sum_j \bigl(\text{ETS}(s_j; w) - R_j\bigr)^2
\quad \text{subject to} \quad w_i \ge 0, \ \sum_i w_i = 1.
\]

Calibration quality is evaluated using prediction error and monotonicity with respect to evidentiary sufficiency and policy agreement on held-out benchmark splits (Section~\ref{sec:evaluation}). This procedure treats ETS as a proxy for system-internal decision reliability rather than subjective human confidence.\\

\begin{algorithm*}[ht!]
\caption{ETS Computation}
\label{alg:ets_computation}
\begin{algorithmic}[1]

\STATE \textbf{Input:} 
State $s_t$, Blue/Red policies $(\pi_B, \pi_R)$, 
value estimator $Q_B$, replay buffer $\mathcal{B}$, 
calibrated weights $(w_1,w_2,w_3)$.

\STATE \textbf{Output:} $\mathrm{ETS}(s_t) \in [0,1]$

\vspace{0.3em}
\STATE \textbf{// Step 1: Compute Clarity component}
\STATE Extract feature-attribution or attention distribution $\alpha_t$.
\STATE Compute attention entropy $H(\alpha_t)$.
\STATE Estimate local linear approximation fidelity of $Q_B$ around $s_t$.
\STATE Evaluate action-ranking stability under small perturbations of $s_t$.
\STATE Aggregate into

\begin{align*}
\mathrm{Clarity}(s_t) = \ \alpha_1 (1 - H(\alpha_t)) 
                        + \alpha_2 \mathrm{Linearity}(Q_B) 
                        + \alpha_3 \mathrm{Consistency}(\pi_B).
\end{align*}

\vspace{0.3em}
\STATE \textbf{// Step 2: Compute Completeness component}
\STATE Estimate state-density in replay buffer neighborhood:
\[
\mathrm{StateCoverage}(s_t) = 
\frac{|\{s \in \mathcal{B} : \|s - s_t\| \le \epsilon\}|}{|\mathcal{B}|}.
\]
\STATE Compute empirical action entropy in local neighborhood.
\STATE Measure feature-attribution diversity across top-$k$ salient features.
\STATE Aggregate into
\[
\mathrm{Completeness}(s_t) =
\beta_1 \mathrm{StateCoverage} +
\beta_2 H(\text{Actions} \mid s \approx s_t) +
\beta_3 \mathrm{FeatureDiversity}.
\]

\vspace{0.3em}
\STATE \textbf{// Step 3: Compute Confidence component}
\STATE Let $a^\star = \arg\max_a \pi_B(a \mid s_t)$.
\STATE Estimate epistemic uncertainty via:
\begin{itemize}
\item inverse policy temperature $1/\tau_{\pi_B}$,
\item inverse bootstrap variance $\bigl(1+\mathrm{Var}(Q_B(s_t,a^\star))\bigr)^{-1}$,
\item inverse policy divergence $(1+\mathcal{D}(s_t))^{-1}$.
\end{itemize}
\STATE Aggregate into
\[
\mathrm{Confidence}(s_t) =
\gamma_1 \frac{1}{\tau_{\pi_B}} +
\gamma_2 \frac{1}{1+\mathrm{Var}(Q_B)} +
\gamma_3 \frac{1}{1+\mathcal{D}(s_t)}.
\]

\vspace{0.3em}
\STATE \textbf{// Step 4: Weighted aggregation and normalization}
\STATE Compute raw score:
\[
\mathrm{ETS}_{raw}(s_t) =
w_1 \mathrm{Clarity}(s_t) +
w_2 \mathrm{Completeness}(s_t) +
w_3 \mathrm{Confidence}(s_t).
\]
\STATE Apply bounded normalization (e.g., min–max or logistic scaling):
\[
\mathrm{ETS}(s_t) =
\sigma\!\left( \frac{\mathrm{ETS}_{raw}(s_t)-\mu}{\sigma_{cal}} \right).
\]

\STATE \textbf{return} $\mathrm{ETS}(s_t)$

\end{algorithmic}
\end{algorithm*}

\subsection{Overall C-MADF Decision Procedure}
\label{subsec:overall_algorithm}

We now summarize the complete online decision loop of C-MADF, integrating causal discovery, MDP-DAG planning, adversarial deliberation, and human oversight. Figure~\ref{fig:decision_loop} provides a high-level overview of the end-to-end decision process as formalized in Algorithm~\ref{alg:cmadf_overall}. It serves to synthesize the interaction between telemetry ingestion, causal roadmapping, and human-in-the-loop escalation, demonstrating how the framework maintains "Clarity of Command" across different operational phases.

\begin{algorithm}[ht!]
\caption{C-MADF Online Incident Response}
\label{alg:cmadf_overall}
\begin{algorithmic}[1]
\STATE \textbf{Offline / Periodic Phase:}
\STATE Collect and update historical telemetry $\mathcal{D}$ from the security environment.
\STATE Run Causal Discovery (Algorithm~\ref{alg:causal_discovery}) to obtain updated causal DAG $\dagG$.
\STATE Construct or refine MDP-DAG roadmap $\mathcal{M}$ using Algorithm~\ref{alg:mdp_dag_construction}.
\STATE Train or fine-tune Council of Rivals policies $\pi_B$, $\pi_R$ via self-play (Algorithm~\ref{alg:council_training}).

\vspace{0.5em}
\STATE \textbf{Online Incident Handling Phase:}
\FOR{each incoming alert or detected anomaly}
    \STATE Encode current context and telemetry into MDP state $s_0$.
    \FOR{$t = 0, 1, \dots, T_{\max}$}
        \STATE Blue-Team proposes action $a_B \sim \pi_B(\cdot \mid s_t)$.
        \STATE Red-Team critiques and proposes $a_R \sim \pi_R(\cdot \mid s_t)$.
        \STATE Compute policy divergence $\mathcal{D}(s_t)$.
        \STATE Resolve executed action $a_t$ based on $(a_B, a_R, \mathcal{D}(s_t))$ and safety rules.
        \STATE Simulate or execute $a_t$ on the environment (subject to safeguards), observe $s_{t+1}$ and reward $r_t$.
        \STATE Compute ETS$(s_t)$ using Algorithm~\ref{alg:ets_computation}.
        \IF{$\text{ETS}(s_t) < \tau_{\text{escalate}}$ \OR $\mathcal{D}(s_t) > \tau_{\text{div}}$}
            \STATE Generate detailed explanation: hypotheses, evidence, counterfactuals, and risk assessment.
            \STATE Present recommendation and rationale to supervisory adjudication for final decision.
            \STATE Log execution outcomes and system-internal reliability signals for offline analysis and threshold recalibration.

            \STATE \textbf{break} (end automated loop for this incident).
        \ELSIF{$s_{t+1}$ is a terminal state (e.g., ``Threat Mitigated'' or ``Investigation Aborted'')}
            \STATE Log trajectory, decisions, and outcomes for auditing and future training.
            \STATE \textbf{break}
        \ELSE
            \STATE Set $s_t \gets s_{t+1}$ and continue deliberation.
        \ENDIF
    \ENDFOR
\ENDFOR
\end{algorithmic}
\end{algorithm}

Algorithm~\ref{alg:cmadf_overall} provides a general, end-to-end specification of the C-MADF framework suitable for rigorous analysis and implementation. It ensures that all autonomous actions are grounded in causal structure, optimized via reinforcement learning, stress-tested by adversarial deliberation, and ultimately governed by human-understandable explanations and trust-calibrated intervention thresholds.

\begin{figure}[pht!]
    \centering
    \includegraphics[width=\columnwidth]{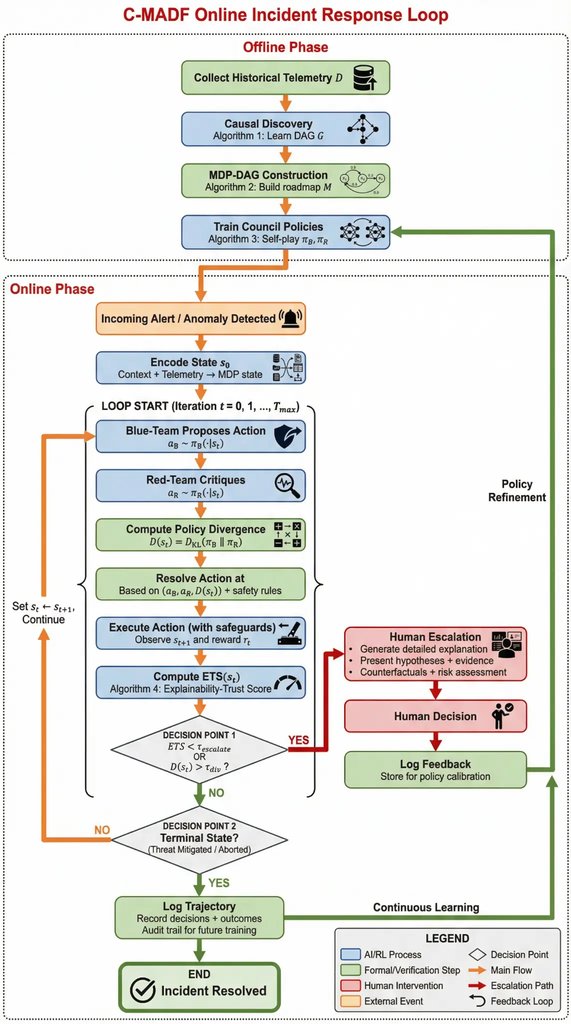}
\caption{Illustration of the Shadow-Jitter attack scenario. Adversarial timing perturbations and log injections distort telemetry observations, inducing misleading correlations in conventional monolithic detection pipelines. In contrast, C-MADF leverages causal filtering and adversarial multi-agent validation to distinguish injected noise from genuine causal signals, thereby reducing false-positive mitigation actions under partial-compromise assumptions.}
    \label{fig:decision_loop}
\end{figure}

\begin{takeawaybox}[C-MADF Design Summary]
The C-MADF architecture integrates causal modeling, constrained decision optimization, adversarial multi-agent deliberation, and calibrated human oversight into a unified framework. Specifically, it: (i) learns a SCM to capture cause–effect relationships in telemetry data; (ii) compiles these constraints into a causally admissible MDP-DAG roadmap that restricts policy search to structurally valid investigation trajectories; (iii) employs an dual-objective multi-agent reinforcement learning with asymmetric reward shaping mechanism in which a Red-Team policy systematically challenges Blue-Team mitigation proposals to reduce false positives and overconfident actions; and (iv) exposes decision-relevant artifacts through a human-in-the-loop interface, where a bounded ETS governs escalation under epistemic uncertainty.
\end{takeawaybox}

\section{Security Analysis}
\label{sec:security_analysis}


In this section, we provide a formal analysis of the security properties of the C-MADF framework. We analyze its resilience against the threats outlined in our threat model and provide arguments for the security guarantees it offers.  {We emphasize that the results here are \emph{verification-aligned} rather than full formal verification in the classical model-checking sense: our theorems give probabilistic bounds on violation frequencies under assumptions about causal mis-specification and adversarial capabilities, but we do not exhaustively explore the state space or prove CTL/LTL properties over the induced MDP-DAG. This distinction follows the model-checking literature, where exhaustive state-space verification is treated as a separate guarantee class} \cite{model_checking}.

\subsection{Resilience to Adversarial Data Manipulation}

A primary threat is adversarial manipulation of telemetry data, such as ``Shadow-Jittering.'' C-MADF's resilience to this threat stems from two components:
\begin{enumerate}
    \item \textbf{Causal Filtering:} The Causal Discovery Module models expected cause--effect relationships. Manipulations that violate learned causal constraints (e.g., high traffic with no corresponding process activity) are flagged as causally inconsistent, filtering unsophisticated perturbations.
    \item \textbf{Adversarial Deliberation:} For causally plausible attacks, the ``Council of Rivals'' provides a second layer of defence. A Red-Team agent challenges the Blue-Team hypothesis by seeking benign explanations for anomalies. The Policy Divergence Score $\mathcal{D}(s_t)$ quantifies ambiguity, discouraging disruptive action under substantial disagreement.
\end{enumerate}

\subsection{Formal Security Properties}

We formalize two key properties of C-MADF:
(i) Causal Consistency of policy execution and
(ii) Adversarial Robustness in terms of false-positive reduction under telemetry manipulation and benign distribution shift.\\


\noindent \textbf{Causal Consistency of Policy Execution}\\

\noindent \textbf{Definition 1 (True Causal DAG).}
Let $\mathcal{G}^\star = (V, E^\star)$ denote the \emph{true} (unknown) causal DAG
governing the security environment, where $V$ is the finite set of investigation states and
$E^\star \subseteq V \times V$ is the set of valid causal transitions.
An edge $(v_i, v_j) \in E^\star$ indicates that $v_j$ is causally admissible only after $v_i$ in the ground-truth system.\\

\noindent \textbf{Definition 2 (Learned Causal DAG and Roadmap Construction).}
Let $\mathcal{G} = (V, E)$ denote the \emph{learned} causal DAG produced by C-MADF.
The corresponding MDP-DAG roadmap is $\mathcal{M} = (S, A, P)$ with $S\subseteq V$, where 
$A$ is a finite action set and $P(s' \mid s, a)$ is a transition kernel satisfying
\[
P(s' \mid s,a) > 0 \Rightarrow (s,s') \in E.
\]
Thus, all realizable transitions in $\mathcal{M}$ are restricted to edges in the learned DAG $\mathcal{G}$.\\

\noindent \textbf{Definition 3 (Spurious-Transition Mass).}
Fix a horizon $T$ and consider trajectories generated by $\mathcal{M}$ under a policy $\pi$.
For each $s\in S$, define the admissible action set
\[
A(s) := \{a\in A : \exists s' \text{ such that } P(s'\mid s,a) > 0\}.
\]
Assume $\pi$ is admissible, i.e., $\pi(a\mid s)=0$ for all $a\notin A(s)$.
Define the per-step spurious-transition probability
\[
\rho_t := \Pr_\pi\!\left[(s_t,s_{t+1}) \in E\setminus E^\star\right],
\]
where $\Pr_\pi(\cdot)$ is over the randomness of $s_0$, the policy, and the transition kernel $P$.
We say C-MADF has bounded spurious-transition mass $\rho$ if $\rho_t \le \rho$ for all $t\in\{0,\dots,T-1\}$.

\begin{theorem}[\textbf{Robust Causal Consistency}]
Let $\mathcal{G}^\star=(V,E^\star)$ be the true causal DAG and $\mathcal{G}=(V,E)$ the learned DAG.
Let $\mathcal{M}=(S,A,P)$ satisfy $P(s' \mid s,a)>0 \Rightarrow (s,s')\in E$.
Let $\tau=(s_0,\dots,s_T)$ be the random trajectory generated under any admissible policy $\pi$.
If $\Pr_\pi[(s_t,s_{t+1}) \in E\setminus E^\star] \le \rho$ for all $t$, then
\[
\mathbb{E}_\pi\!\left[\sum_{t=0}^{T-1}
\mathbf{1}\big[(s_t,s_{t+1})\notin E^\star\big]\right]
\le \rho T.
\]
In particular, if $\rho=0$, then $(s_t,s_{t+1})\in E^\star$ holds almost surely for all $t$.
\end{theorem}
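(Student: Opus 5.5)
The argument combines the support property of the transition kernel with linearity of expectation. The main idea is that any transition outside $E^\star$ must actually be a spurious edge in $E\setminus E^\star$. This holds because realizable transitions of $\mathcal{M}$ are confined to the learned edge set $E$.

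\textbf{Step 1 (support reduction).} We show that almost surely $(s_t,s_{t+1})\in E$ for every $t$. By the chain rule for the trajectory law,
\[
\Pr_\pi[(s_t,s_{t+1})\notin E]=\sum_{s,a}\Pr_\pi[s_t=s]\,\pi(a\mid s)\sum_{s':(s,s')\notin E}P(s'\mid s,a).
\]
Each inner sum vanishes by the hypothesis $P(s'\mid s,a)>0\Rightarrow(s,s')\in E$. Since $S$ and $A$ are finite, the sums are finite and no measurability issues arise. Consequently
\[
\mathbf{1}[(s_t,s_{t+1})\notin E^\star]=\mathbf{1}[(s_t,s_{t+1})\in E\setminus E^\star]
\]
almost surely. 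The two events differ only on $\{(s_t,s_{t+1})\notin E\}$, which has probability zero. Admissibility of $\pi$ does not affect this step. It only ensures that $A(s)$ is nonempty wherever $\pi$ places mass, so the trajectory is well defined.

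\textbf{Step 2 (linearity).} Take expectations of the finite sum and exchange sum and expectation:
\[
\mathbb{E}_\pi\!\left[\sum_{t=0}^{T-1}\mathbf{1}[(s_t,s_{t+1})\notin E^\star]\right]=\sum_{t=0}^{T-1}\Pr_\pi[(s_t,s_{t+1})\in E\setminus E^\star]=\sum_{t=0}^{T-1}\rho_t\le\rho T,
\]
using the bounded spurious-transition mass assumption of Definition~3.

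\textbf{Step 3 (the case $\rho=0$).} When $\rho=0$, each term $\rho_t$ equals $0$. By Step~1, $\Pr_\pi[(s_t,s_{t+1})\notin E^\star]=\rho_t=0$ for each $t$. A finite union over $t<T$ of null events is null, so $(s_t,s_{t+1})\in E^\star$ for all $t$ almost surely. Equivalently, the nonnegative integer-valued violation count has expectation $0$ and therefore vanishes almost surely.

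The only real content is Step~1. The one point needing care there is that the hypothesis constrains only transitions with positive kernel mass, so "realizable" must be read as holding almost surely rather than surely. Everything else is linearity of expectation. No earlier result of the paper is needed beyond Definitions~2 and~3.
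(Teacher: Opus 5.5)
Your proof is correct and takes the same route as the paper: the paper also uses the roadmap property to get $(s_t,s_{t+1})\in E$ almost surely, equates the two indicators a.s., and finishes by linearity of expectation. Your chain-rule justification of the almost-sure support claim and your explicit handling of the $\rho=0$ clause (a finite union of null events) supply details the paper leaves implicit.
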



\begin{IEEEproof}
By the roadmap property, $(s_t,s_{t+1})\in E$ holds with probability $1$ for all $t$.
Hence, almost surely,
\[
\mathbf{1}\big[(s_t,s_{t+1})\notin E^\star\big]
=
\mathbf{1}\big[(s_t,s_{t+1})\in E\setminus E^\star\big].
\]

Therefore, by linearity of expectation,
\begin{align*}
\mathbb{E}_\pi\!\left[\sum_{t=0}^{T-1}
\mathbf{1}\big[(s_t,s_{t+1})\notin E^\star\big]\right]
&= \sum_{t=0}^{T-1}
\Pr_\pi\!\left[(s_t,s_{t+1})\in E\setminus E^\star\right] \\
&\le \sum_{t=0}^{T-1}\rho
= \rho T.
\end{align*}
\end{IEEEproof}

 {This result is a bounded-violation guarantee conditioned on the quality of the learned SCM; it does not constitute a global safety proof over all trajectories, nor does it replace exhaustive model checking.}\\

\noindent \textbf{Adversarial Robustness of False-Positive Behaviour}\\

\noindent \textbf{Definition 4 ({Adversarial Perturbation and Induced Measures).}}
Let $(\mathcal{X},\mathcal{F})$ be the telemetry measurable space and $Y\in\{0,1\}$ the ground-truth label.
Let $\mathbb{P}_0$ denote the benign conditional probability measure on $(\mathcal{X},\mathcal{F})$ (the law of $X\mid(Y=0)$), and fix a metric $d$ on $\mathcal{X}$. 
An adversary is a measurable mapping $\mathcal{A}_\delta:\mathcal{X}\to\mathcal{X}$ producing $\tilde{X}=\mathcal{A}_\delta(X)$ such that
$d(\tilde{X},X)\le \delta$ holds $\mathbb{P}_0$-almost surely.
Let $\mathbb{Q}_0$ denote the induced perturbed benign probability measure on $(\mathcal{X},\mathcal{F})$, i.e., the law of $\tilde{X}$ when $X\sim\mathbb{P}_0$.\\

\noindent \textbf{Definition 5 (False-Positive Probability under Perturbed Benign Telemetry).}
Let $(\mathcal{A},\mathcal{G})$ be the action measurable space and let $\mathcal{U}\in\mathcal{G}$ denote disruptive actions.
For any measurable decision rule $f:\mathcal{X}\to\mathcal{A}$, define
\[
P_{\mathrm{FP}}^{\mathbb{Q}_0}(f)
:= \Pr_{\tilde{X}\sim\mathbb{Q}_0}\!\left[f(\tilde{X}) \in \mathcal{U}\right].
\]
\begin{proposition}[\textbf{Monotonic Reduction of False-Positive Mitigations under Gating}]
\label{prop:fp_gating}
Let $f_B:\mathcal{X}\to\mathcal{A}$ be a measurable baseline decision rule and let
$f_C:\mathcal{X}\to\mathcal{A}$ be the gated rule defined by
\[
f_C(x) \in \mathcal{U}
\;\Longleftrightarrow\;
\big(f_B(x) \in \mathcal{U}\big)\wedge G(x),
\]
where $G:\mathcal{X}\to\{0,1\}$ is a measurable gating predicate and $\mathcal{U}\subseteq\mathcal{A}$
denotes the set of disruptive mitigation actions.
Then, under the nominal distribution $\mathbb{Q}_0$,
\[
P_{\mathrm{FP}}^{\mathbb{Q}_0}(f_C) \le P_{\mathrm{FP}}^{\mathbb{Q}_0}(f_B).
\]
Moreover, if there exists $\epsilon>0$ such that
\[
\Pr_{\tilde{X}\sim\mathbb{Q}_0}\!\left[
f_B(\tilde{X}) \in \mathcal{U}\ \wedge\ G(\tilde{X})=0
\right] \ge \epsilon,
\]
then
\[
P_{\mathrm{FP}}^{\mathbb{Q}_0}(f_C) \le P_{\mathrm{FP}}^{\mathbb{Q}_0}(f_B) - \epsilon.
\]
\end{proposition}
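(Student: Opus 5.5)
The argument is an event-inclusion and additivity argument on the measurable space $(\mathcal{X},\mathcal{F})$ equipped with $\mathbb{Q}_0$. Introduce the three events
\[
U_B := \{x : f_B(x)\in\mathcal{U}\},\quad
U_C := \{x : f_C(x)\in\mathcal{U}\},\quad
N := \{x : G(x)=0\}.
\]
Each is measurable, since $f_B,f_C$ are measurable maps into $(\mathcal{A},\mathcal{G})$ with $\mathcal{U}\in\mathcal{G}$, and $G$ is a measurable predicate. By Definition~5 we have $P_{\mathrm{FP}}^{\mathbb{Q}_0}(f_B)=\mathbb{Q}_0(U_B)$ and $P_{\mathrm{FP}}^{\mathbb{Q}_0}(f_C)=\mathbb{Q}_0(U_C)$, so both claims become statements about $\mathbb{Q}_0$-measures of sets.

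For the first claim, the defining equivalence of $f_C$ says pointwise that $U_C = U_B\cap N^c$, where $N^c=\{G=1\}$. In particular $U_C\subseteq U_B$, and monotonicity of the probability measure $\mathbb{Q}_0$ gives $P_{\mathrm{FP}}^{\mathbb{Q}_0}(f_C)\le P_{\mathrm{FP}}^{\mathbb{Q}_0}(f_B)$. No property of the adversary $\mathcal{A}_\delta$ from Definition~4 is needed. The bound holds for any probability measure, so in particular for the perturbed benign law $\mathbb{Q}_0$.

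For the quantitative claim, decompose $U_B$ as the disjoint union $U_B=(U_B\cap N^c)\,\dot\cup\,(U_B\cap N) = U_C\,\dot\cup\,(U_B\cap N)$. Finite additivity then yields the exact identity
\[
P_{\mathrm{FP}}^{\mathbb{Q}_0}(f_B)=P_{\mathrm{FP}}^{\mathbb{Q}_0}(f_C)+\mathbb{Q}_0\big(f_B(\tilde X)\in\mathcal{U},\,G(\tilde X)=0\big).
\]
The last term is at least $\epsilon$ by hypothesis, and rearranging gives the stated bound. The identity also shows that the first claim is the case $\epsilon=0$.

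The only real obstacle is bookkeeping rather than analysis. One must check that $U_C=U_B\cap\{G=1\}$ is an equality of sets, and not merely that $f_C$ is some rule agreeing with $f_B$ on ungated points. One must also confirm the measurability of $\{G=0\}$, which is given because $G$ is measurable into $\{0,1\}$. Once these are in place, the proof reduces to monotonicity and additivity of $\mathbb{Q}_0$.
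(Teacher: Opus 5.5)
Your proposal is correct and follows the paper's proof essentially verbatim: the paper also starts from the set identity $\{f_C(\tilde{X})\in\mathcal{U}\}=\{f_B(\tilde{X})\in\mathcal{U}\}\cap\{G(\tilde{X})=1\}$ to get monotonicity. It then partitions $\{f_B(\tilde{X})\in\mathcal{U}\}$ according to $G(\tilde{X})\in\{0,1\}$ and uses additivity to obtain the $\epsilon$-improvement. One small remark: the statement does not assume $f_C$ is measurable, but you do not need that, since the set identity already exhibits $U_C$ as an intersection of measurable sets.
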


\begin{IEEEproof}
By definition,
\[
\{ f_C(\tilde{X}) \in \mathcal{U} \}
=
\{ f_B(\tilde{X}) \in \mathcal{U} \}\cap \{G(\tilde{X})=1\}.
\]
Therefore,
\begin{align*}
P_{\mathrm{FP}}^{\mathbb{Q}_0}(f_C)
&=
\Pr_{\tilde{X}\sim\mathbb{Q}_0}\!\left[f_B(\tilde{X})\in\mathcal{U}\ \wedge\ G(\tilde{X})=1\right] \\
&\le
\Pr_{\tilde{X}\sim\mathbb{Q}_0}\!\left[f_B(\tilde{X})\in\mathcal{U}\right]
=
P_{\mathrm{FP}}^{\mathbb{Q}_0}(f_B).
\end{align*}
For the strict decrease, partition the event $\{f_B(\tilde{X})\in\mathcal{U}\}$ into
$G(\tilde{X})\in\{0,1\}$ to obtain
\begin{align*}
P_{\mathrm{FP}}^{\mathbb{Q}_0}(f_B)
= &\ \Pr[f_B(\tilde{X})\in\mathcal{U},\,G(\tilde{X})=1] \\
  &+ \Pr[f_B(\tilde{X})\in\mathcal{U},\,G(\tilde{X})=0],
\end{align*}
and use the assumption that the second term is at least $\epsilon$.
\end{IEEEproof}

Proposition~\ref{prop:fp_gating} provides a sanity guarantee: gating cannot increase the
rate of disruptive false-positive mitigations under the nominal distribution.
However, this result is monotonicity-based and does not constitute a worst-case
robustness guarantee under arbitrary telemetry manipulation, poisoning, or SCM mis-learning.

\paragraph{Definition 6 (Total Variation Distance).}
The total variation distance between $\mathbb{P}_0$ and $\mathbb{Q}_0$ is
\[
d_{\mathrm{TV}}(\mathbb{P}_0,\mathbb{Q}_0)
:= \sup_{B \in \mathcal{F}} |\mathbb{P}_0(B) - \mathbb{Q}_0(B)|.
\]

\begin{theorem}[\textbf{TV Robustness of False-Positive Rate}]
Let $f:\mathcal{X}\to\mathcal{A}$ be measurable and let $\mathcal{U}\in\mathcal{G}$.
Then
\[
\left|
\Pr_{\tilde{X}\sim\mathbb{Q}_0}[f(\tilde{X}) \in \mathcal{U}]
-
\Pr_{X\sim\mathbb{P}_0}[f(X) \in \mathcal{U}]
\right|
\le
d_{\mathrm{TV}}(\mathbb{P}_0,\mathbb{Q}_0).
\]
\end{theorem}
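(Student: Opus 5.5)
The argument reduces the statement to the definition of total variation distance through a preimage (pullback) construction. Set
\[
B := f^{-1}(\mathcal{U}) = \{x\in\mathcal{X} : f(x)\in\mathcal{U}\}.
\]
Since $f$ is $(\mathcal{F},\mathcal{G})$-measurable and $\mathcal{U}\in\mathcal{G}$, we have $B\in\mathcal{F}$. This is the only place measurability is used, and it is what makes $B$ a legitimate competitor in the supremum of Definition 6.

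Next, rewrite both false-positive probabilities as measures of $B$. For $\tilde X\sim\mathbb{Q}_0$, the event $\{f(\tilde X)\in\mathcal{U}\}$ is exactly $\{\tilde X\in B\}$. Hence
\[
\Pr_{\tilde X\sim\mathbb{Q}_0}[f(\tilde X)\in\mathcal{U}] = \mathbb{Q}_0(B).
\]
Here $\mathbb{Q}_0$ is, by Definition 4, the law of $\tilde X$, that is, the pushforward of $\mathbb{P}_0$ under $\mathcal{A}_\delta$. In the same way,
\[
\Pr_{X\sim\mathbb{P}_0}[f(X)\in\mathcal{U}] = \mathbb{P}_0(B).
\]
The left-hand side of the theorem therefore equals $|\mathbb{Q}_0(B)-\mathbb{P}_0(B)|$.

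Finally, because $B\in\mathcal{F}$, this quantity is one of the terms over which the supremum in Definition 6 is taken. It is therefore bounded by $\sup_{B'\in\mathcal{F}}|\mathbb{P}_0(B')-\mathbb{Q}_0(B')| = d_{\mathrm{TV}}(\mathbb{P}_0,\mathbb{Q}_0)$, which is the claim.

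There is no real obstacle; the only care needed is bookkeeping. One must identify $\mathbb{Q}_0$ as the law of $\tilde X$, so that the probability over $\tilde X$ is a $\mathbb{Q}_0$-measure of a set in $\mathcal{F}$. One must also use measurability of $f$ to place $B$ in $\mathcal{F}$. I would also remark that the bound is stated purely in terms of $d_{\mathrm{TV}}$. The constraint $d(\tilde X,X)\le\delta$ does not by itself make $d_{\mathrm{TV}}$ small, since small metric perturbations can produce mutually singular laws. Any quantitative robustness in terms of $\delta$ would therefore require an extra assumption controlling $d_{\mathrm{TV}}(\mathbb{P}_0,\mathbb{Q}_0)$ as a function of $\delta$.
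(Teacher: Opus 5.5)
Your proof is correct and complete. The paper states this theorem without a proof, and your pullback step $B=f^{-1}(\mathcal{U})\in\mathcal{F}$ is exactly the intended justification: it turns the left-hand side into $|\mathbb{Q}_0(B)-\mathbb{P}_0(B)|$, which is a single term of the supremum defining $d_{\mathrm{TV}}(\mathbb{P}_0,\mathbb{Q}_0)$, and your closing remark that $d(\tilde X,X)\le\delta$ alone does not control $d_{\mathrm{TV}}$ is also accurate.
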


\begin{corollary}[\textbf{TV-Adjusted FP Bound for C-MADF}]
Under the assumptions above, if
$\Pr_{\tilde{X}\sim\mathbb{Q}_0}[f_B(\tilde{X})\in\mathcal{U}\wedge G(\tilde{X})=0]\ge \epsilon$,
then
\[
P_{\mathrm{FP}}^{\mathbb{Q}_0}(f_C)
\le
P_{\mathrm{FP}}^{\mathbb{P}_0}(f_B)
+
d_{\mathrm{TV}}(\mathbb{P}_0,\mathbb{Q}_0)
-
\epsilon,
\]
where $P_{\mathrm{FP}}^{\mathbb{P}_0}(f_B):=\Pr_{X\sim\mathbb{P}_0}[f_B(X)\in\mathcal{U}]$.
\end{corollary}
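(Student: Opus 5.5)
The corollary follows by chaining the two preceding results: Proposition~\ref{prop:fp_gating} bounds the gated rule by the baseline rule on the same measure $\mathbb{Q}_0$, and the TV Robustness theorem moves the baseline rule's false-positive probability from $\mathbb{Q}_0$ to $\mathbb{P}_0$ at cost $d_{\mathrm{TV}}(\mathbb{P}_0,\mathbb{Q}_0)$. No new probabilistic machinery is needed; the plan is a three-line inequality chain.

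First, I would invoke the second part of Proposition~\ref{prop:fp_gating}. Its hypothesis is exactly the assumption $\Pr_{\tilde{X}\sim\mathbb{Q}_0}[f_B(\tilde{X})\in\mathcal{U}\wedge G(\tilde{X})=0]\ge\epsilon$ of the corollary, so it gives
\[
P_{\mathrm{FP}}^{\mathbb{Q}_0}(f_C)\le P_{\mathrm{FP}}^{\mathbb{Q}_0}(f_B)-\epsilon .
\]
Second, I would apply the TV Robustness theorem with $f=f_B$ and the same disruptive set $\mathcal{U}$. Dropping the absolute value on the favourable side yields
\[
P_{\mathrm{FP}}^{\mathbb{Q}_0}(f_B)\le P_{\mathrm{FP}}^{\mathbb{P}_0}(f_B)+d_{\mathrm{TV}}(\mathbb{P}_0,\mathbb{Q}_0).
\]
Substituting the second display into the first gives the claimed bound.

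The only point needing care is that the theorem is stated without proof at this point. If I must justify it rather than assume it, I would take $B:=f_B^{-1}(\mathcal{U})\in\mathcal{F}$, which is measurable because $f_B$ is measurable and $\mathcal{U}\in\mathcal{G}$. Then
\[
\bigl|\mathbb{Q}_0(B)-\mathbb{P}_0(B)\bigr|\le\sup_{B'\in\mathcal{F}}\bigl|\mathbb{P}_0(B')-\mathbb{Q}_0(B')\bigr|=d_{\mathrm{TV}}(\mathbb{P}_0,\mathbb{Q}_0)
\]
by Definition~6. I would also check that $P_{\mathrm{FP}}^{\mathbb{P}_0}(f_B)$ and $P_{\mathrm{FP}}^{\mathbb{Q}_0}(f_B)$ are the probabilities of this same event $B$ under the two laws, since both are defined via $\{f_B\in\mathcal{U}\}$ on the common space $(\mathcal{X},\mathcal{F})$. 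With that, the chain closes.
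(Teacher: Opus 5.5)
Your proof is correct and follows the argument the paper intends: the corollary is stated without a separate proof because it is exactly the chain of the second part of Proposition~\ref{prop:fp_gating} with the TV Robustness theorem applied to $f_B$ and the set $\mathcal{U}$. Your justification of that theorem through the event $B=f_B^{-1}(\mathcal{U})\in\mathcal{F}$ also supplies a step the paper leaves unproved.
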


\paragraph{Definition 7 (Rejected False-Positive Indicator).}
Let $\tilde{X}\sim\mathbb{Q}_0$. Define
\[
Z := \mathbf{1}\!\left[f_B(\tilde{X}) \in \mathcal{U}\ \wedge\ G(\tilde{X})=0\right],
\qquad
\epsilon := \mathbb{E}_{\tilde{X}\sim\mathbb{Q}_0}[Z].
\]

\begin{theorem}[\textbf{PAC-Style Lower Bound for Rejected-FP Mass}]
Let $\tilde{X}_1,\dots,\tilde{X}_n$ be i.i.d. samples from $\mathbb{Q}_0$ and define
\[
Z_i := \mathbf{1}\!\left[f_B(\tilde{X}_i)\in\mathcal{U}\ \wedge\ G(\tilde{X}_i)=0\right],
\qquad
\hat{\epsilon}_n := \frac{1}{n}\sum_{i=1}^n Z_i.
\]
Then for any $\delta \in (0,1)$, with probability at least $1-\delta$,
\[
\epsilon \ge \hat{\epsilon}_n - \sqrt{\frac{\ln(2/\delta)}{2n}}.
\]
\end{theorem}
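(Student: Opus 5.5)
The bound follows from Hoeffding's inequality applied to the indicators $Z_1,\dots,Z_n$ from Definition~7.

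\textbf{Step 1: the $Z_i$ are i.i.d.\ and bounded.} The rules $f_B$ and $G$ are measurable and $\mathcal{U}\in\mathcal{G}$. Hence the event $\{f_B(\tilde{X}_i)\in\mathcal{U}\}\cap\{G(\tilde{X}_i)=0\}$ is measurable, and each $Z_i$ is a well-defined random variable with values in $\{0,1\}$. Each $Z_i$ is the same fixed measurable function of $\tilde{X}_i$. Since the $\tilde{X}_i$ are i.i.d.\ from $\mathbb{Q}_0$, the $Z_i$ are i.i.d.\ Bernoulli variables with mean $\mathbb{E}[Z_i]=\epsilon$, matching Definition~7. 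It follows that $\mathbb{E}[\hat{\epsilon}_n]=\epsilon$.

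\textbf{Step 2: apply Hoeffding's inequality.} The summands lie in $[0,1]$, so for every $t>0$
\[
\Pr\bigl[|\hat{\epsilon}_n-\epsilon|\ge t\bigr]\le 2\exp(-2nt^2).
\]
We set $t=\sqrt{\ln(2/\delta)/(2n)}$, which makes the right-hand side equal to $\delta$. Therefore, with probability at least $1-\delta$,
\[
|\hat{\epsilon}_n-\epsilon|< t .
\]
This two-sided event contains the one-sided event $\epsilon\ge\hat{\epsilon}_n-t$, which is exactly the claimed bound.

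\textbf{Remarks.} Using only the one-sided Hoeffding tail would give the sharper $\ln(1/\delta)$ in place of $\ln(2/\delta)$. The stated $\ln(2/\delta)$ is weaker, so it holds a fortiori.

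The main obstacle is not analytic; it is making sure the i.i.d.\ hypothesis transfers to the $Z_i$. The gate $G$ and the baseline rule $f_B$ must be fixed beforehand and not fitted on the same samples $\tilde{X}_1,\dots,\tilde{X}_n$. Otherwise the $Z_i$ would be dependent on the data used to construct them, and Hoeffding's inequality would not apply directly.

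We would state this independence explicitly in the proof. The resulting lower bound on $\epsilon$ can then be substituted into Proposition~\ref{prop:fp_gating} or the TV-adjusted corollary, giving a high-probability FP bound.
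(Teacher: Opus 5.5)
Your proof is correct and follows the paper's argument: two-sided Hoeffding for the $[0,1]$-valued $Z_i$ with mean $\epsilon$, then $t=\sqrt{\ln(2/\delta)/(2n)}$ so that the tail bound equals $\delta$. One wording slip: the two-sided event does not \emph{contain} the one-sided event but is contained in it, $\{|\hat{\epsilon}_n-\epsilon|<t\}\subseteq\{\epsilon\ge\hat{\epsilon}_n-t\}$, and that inclusion is what carries the probability bound $1-\delta$ over to the one-sided event.
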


\begin{IEEEproof}
Each $Z_i \in [0,1]$ and $\mathbb{E}[Z_i]=\epsilon$.
By Hoeffding's inequality,
\[
\Pr\!\left(|\hat{\epsilon}_n-\epsilon| \ge t\right) \le 2\exp(-2nt^2).
\]
Setting $t=\sqrt{\ln(2/\delta)/(2n)}$ yields the stated bound.
\end{IEEEproof}

\subsection{Verification and Validation}

The structural formulation of C-MADF as a causally constrained MDP-DAG enables formal verification and systematic validation of safety-critical behaviors. 
Because the decision process is restricted to admissible transitions induced by the learned causal graph, the reachable state space is explicitly characterized, making temporal safety properties amenable to algorithmic analysis.\\

\paragraph{Temporal Safety Specifications.}
Let $\mathcal{M}=(S,A,P)$ denote the constrained MDP-DAG and let $\mathcal{D}(s)$ denote the Policy Divergence Score. 
Define the high-uncertainty region
\[
S_{\mathrm{unc}} := \{ s \in S : \mathcal{D}(s) > \tau \},
\]
and let $S_{\mathrm{esc}} \subseteq S$ denote states corresponding to human escalation.

A representative liveness-safety specification can be expressed in Linear Temporal Logic (LTL) as
\[
\mathbf{G}\big(s \in S_{\mathrm{unc}} \rightarrow \mathbf{F}(s \in S_{\mathrm{esc}})\big),
\]
which states that globally, whenever the system enters a high-uncertainty state, it eventually transitions to a human escalation state.\\

\paragraph{Model Checking Perspective.}
Given a finite abstraction of $\mathcal{M}$ (e.g., via discretization of uncertainty levels and bounded horizon), classical model-checking techniques can be applied to verify whether the above LTL property holds over all admissible trajectories. 
In particular, because transitions are constrained by the learned causal DAG, the state-transition graph is acyclic within investigation phases, which reduces verification complexity relative to unconstrained MDPs.\\

\paragraph{Probabilistic Verification.}
When stochastic transitions are retained, probabilistic temporal logics such as PCTL can be used to establish bounds of the form
\[
\Pr\!\left[ \mathbf{F}_{\le T}(s \in S_{\mathrm{esc}}) \mid s_0 \in S_{\mathrm{unc}} \right] \ge 1 - \eta,
\]
ensuring that escalation occurs with high probability within a bounded horizon $T$. 
Such bounds connect directly to the divergence threshold $\tau$ and the gating mechanism defined in Section~\ref{sec:cmadf_framework}.\\

\paragraph{Scope and Limitations.}
We emphasize that these verification procedures rely on the correctness of the learned causal abstraction and finite-state representation. 
They provide guarantees over the induced MDP-DAG, but do not constitute exhaustive verification of the underlying cyber-physical environment.

\section{Experimental Evaluation}
\label{sec:evaluation}

We conduct a comprehensive empirical evaluation on the public CICIoT2023 dataset \cite{ciciot2023} to assess detection robustness, false-positive control, and supervisory-facing explainability of C-MADF.
We compare C-MADF against three recent literature baselines: DUGAT-LSTM, BiLSTM IDS, and Hybrid Weighted-XGBoost, using a shared preprocessing and model-selection protocol. \cite{devendiran2024dugat, imrana2021bidirectional, mohiuddin2023intrusion}

\subsection{Experimental Setup}

\noindent \textbf{Dataset.}
CICIoT2023 is a state-of-the-art, large-scale intrusion-detection benchmark containing extensive benign IoT traffic and multiple modern attack categories (DDoS, DoS, Recon, Web-based, etc.) \cite{ciciot2023}.
We evaluate on this public CICIoT2023 benchmark using its rich network and host telemetry feature set.
Records are preprocessed into the same structured state schema (alert summary, host indicators, and flow statistics), and labels are mapped to a binary attack/benign decision space for C-MADF and all baselines.\\

To reduce evaluation leakage and preserve temporal realism, we use fixed train/validation/test splits at the scenario level, so correlated records from the same scenario family are not distributed across multiple splits. All normalization and feature-scaling statistics are learned on the training split only and then applied unchanged to validation and test data. This prevents optimistic bias from test-informed preprocessing and ensures that all compared methods face identical information constraints.

Because C-MADF is evaluated as a decision-support architecture rather than only a classifier, we map telemetry windows into a structured state representation that preserves temporal context and host--network cross-signals. This representation is shared across all methods through a common interface, so any observed performance difference reflects model behavior rather than differences in feature availability or pre-aggregation privileges.

\noindent \textbf{Baselines.}
To avoid relying on customized internal comparators, we benchmark C-MADF against recent, established literature baselines that are widely used in intrusion detection research.
Specifically, we implement and tune DUGAT-LSTM, BiLSTM IDS, and Hybrid Weighted-XGBoost under the same train/validation/test splits and telemetry channels. \cite{devendiran2024dugat, imrana2021bidirectional, mohiuddin2023intrusion}

For fairness, we apply a shared optimization protocol across all baselines: model-specific hyperparameters are tuned only on the validation split, early-stopping decisions use identical patience rules, and final test metrics are reported only once per seed after model selection. We avoid architecture-specific hand-tuning on the test split and do not apply post-hoc threshold adjustments that are unavailable to competing methods.

\begin{itemize}

\item \textbf{DUGAT-LSTM (Deep IDS Baseline) \cite{devendiran2024dugat}.}
A deep sequence model baseline designed for high-accuracy intrusion detection under complex traffic dynamics. 
For autonomous-response evaluation, its detection outputs are mapped to the same binary attack/benign decision space and action admissibility interface used by C-MADF.

\item \textbf{BiLSTM IDS \cite{imrana2021bidirectional}.}
A strong recurrent baseline for sequential threat-pattern modeling in network telemetry. 
We train and tune this model under identical data partitions and convert prediction outputs to the same intervention decision interface used in our framework.

\item \textbf{Hybrid Weighted-XGBoost \cite{mohiuddin2023intrusion}.}
A competitive ensemble baseline combining meta-heuristic feature weighting with gradient-boosted classification. 
This model is evaluated under the same feature channels and protocol to provide a non-neural cutting-edge reference.

\end{itemize}

\noindent \textbf{Evaluation Metrics.}
Detection performance is evaluated using standard classification metrics:

\begin{itemize}
    \item \textbf{Precision}: proportion of flagged incidents that correspond to genuine attacks.
    \item \textbf{Recall}: proportion of genuine attacks correctly detected.
    \item \textbf{F1-score}: harmonic mean of precision and recall.
    \item \textbf{False Positive Rate (FPR)}: proportion of benign scenarios incorrectly classified as attacks.
\end{itemize}

In addition to detection metrics, we evaluate supervisory-facing interpretability and oversight quality. 
Specifically, we report the ETS and examine its association with evidentiary sufficiency and policy agreement under controlled scenario analysis.

Concretely, ETS is analyzed as an operational gating signal rather than a descriptive score alone: we examine whether lower ETS regions coincide with higher inter-policy disagreement and weaker causal support, and whether higher ETS regions correspond to stable admissible-action consensus. This allows us to assess whether ETS can support practical triage and escalation decisions under uncertain telemetry conditions.

Unless otherwise stated, all reported metrics (precision, recall, F1-score, FPR, ETS) are averaged over three independent random seeds using fixed CICIoT2023 train/validation/test splits.
Table~\ref{tab:results} reports mean values across seeds.
The observed standard deviation across seeds is below 0.02 for all reported metrics.\\

\noindent \textbf{Implementation and Reproducibility.}
All agents are implemented in Python using the PyTorch framework.
For each method and random seed, we fix pseudorandom seeds across data shuffling, neural network initialization, and replay-buffer sampling to ensure experimental consistency.\\

We further log training curves, selected hyperparameters, and per-seed metric summaries for all methods under a unified experiment script. This audit trail enables direct traceability from raw runs to reported tables, and it was used to verify consistency between the main comparison table and the ablation table for the full C-MADF configuration.

\subsection{Performance Comparison on CICIoT2023}

Table~\ref{tab:results} summarizes performance on CICIoT2023. C-MADF achieves a precision of 0.997, recall of 0.961, and F1-score of 0.979, outperforming all three baselines (DUGAT-LSTM: 0.979/0.905/0.941; BiLSTM IDS: 0.982/0.915/0.947; Hybrid Weighted-XGBoost: 0.985/0.941/0.962 for Precision/Recall/F1).
 
Most notably, C-MADF attains a false-positive rate (FPR) of 1.8\%, compared to 11.2\% for DUGAT-LSTM, 9.7\% for BiLSTM IDS, and 8.4\% for Hybrid Weighted-XGBoost. This corresponds to relative FPR reductions of approximately 83.9\%, 81.4\%, and 78.6\%, respectively. Given that false positives in this operational setting may trigger disruptive containment actions affecting grid availability, reducing FPR is particularly important for safety-critical deployment contexts.\\
In practical terms, the gap between 1.8\% FPR (C-MADF) and 8.4\%--11.2\% FPR (baselines) indicates substantially fewer unnecessary interventions under the same evaluation protocol.

Importantly, this reduction is achieved without collapsing recall: C-MADF maintains 0.961 recall, while the strongest baseline recall is 0.941. This pattern argues against a trivial ``always-benign'' explanation and instead supports the intended mechanism of C-MADF---causal admissibility constraints and adversarial policy scrutiny suppress unnecessary actions while preserving sensitivity to genuine attacks.

From an operations perspective, the observed precision--FPR profile is especially relevant for semi-autonomous settings where each false positive may trigger host isolation, traffic blocking, or analyst interruption. Under such conditions, reducing false positives is not only a statistical improvement but also a workflow-stability improvement that can directly reduce defensive self-disruption.
To improve interpretability of these newly added quantitative results, we provide both tabular and visual summaries: Table~\ref{tab:results} gives exact values, while Figure~\ref{fig:results_visual} provides a side-by-side visual comparison across baselines and C-MADF.

\begin{table}[t!]
\centering
\caption{Performance Comparison of C-MADF and Baseline Models {(mean over 3 random seeds; standard deviations $< 0.02$ for all metrics)}}
\label{tab:results}
\begin{tabularx}{\columnwidth}{@{} l c c c c @{}}
\toprule
\textbf{Framework} & \textbf{Precision} & \textbf{Recall} & \textbf{F1-Score} & \textbf{FPR} \\ \midrule
DUGAT-LSTM \cite{devendiran2024dugat} & 0.979 & 0.905 & 0.941 & 11.2\% \\
BiLSTM IDS \cite{imrana2021bidirectional} & 0.982 & 0.915 & 0.947 & 9.7\% \\ 
Hybrid Weighted-XGBoost \cite{mohiuddin2023intrusion} & 0.985 & 0.941 & 0.962 & 8.4\% \\
\textbf{C-MADF} & \textbf{0.997} & \textbf{0.961} & \textbf{0.979} & \textbf{1.8\%} \\ \bottomrule
\end{tabularx}
\end{table}

\begin{figure*}[t]
\centering
\includegraphics[width=\textwidth]{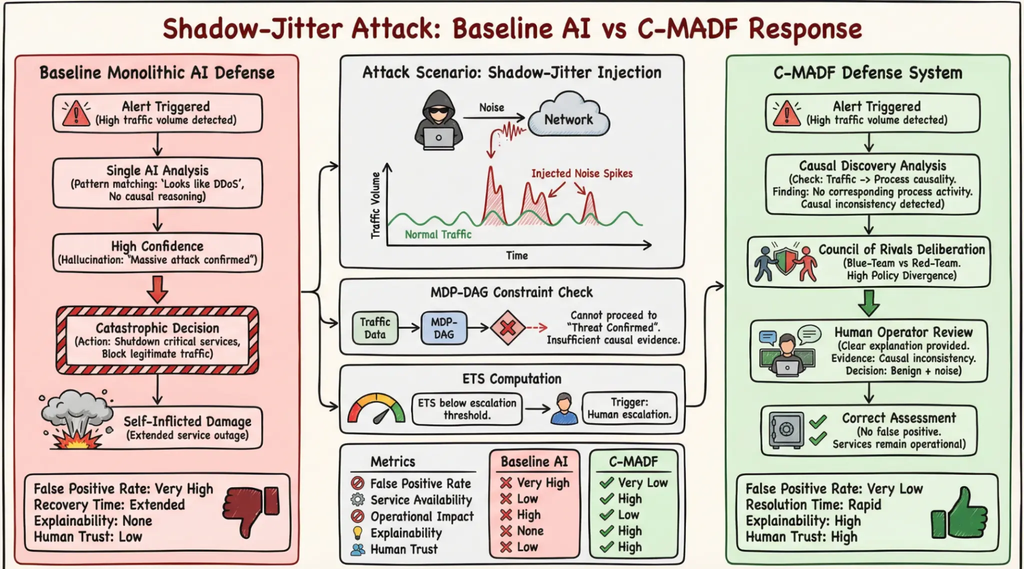}
\caption{Defensive response under Shadow-Jitter injection: C-MADF maintains high detection with low false positives, structured human review, and strong explainability, while the baseline AI fails catastrophically due to hallucinated threat confirmation and lack of verification.}
\label{fig:results_visual}
\end{figure*}

As shown in Figure~\ref{fig:results_visual}, C-MADF's gain is not limited to one metric: it improves precision and F1, preserves competitive recall, and substantially reduces false-positive burden under the shared evaluation protocol. This multi-metric behavior is central to the framework's practical value in high-impact operational settings.

\subsection{Sanity Checks and Baseline Calibration}
\label{subsec:sanity_checks}

To ensure that the observed performance differences are not attributable to artefacts such as distributional bias or inadequate baseline tuning, we conduct a series of validation controls.

\begin{itemize}

\item \textbf{Shared Data Splits and Features.}
All methods (C-MADF, DUGAT-LSTM, BiLSTM IDS, Hybrid Weighted-XGBoost) are trained and evaluated using identical CICIoT2023 splits and the same telemetry feature channels.
No method is provided privileged features or additional telemetry channels.\\

\item \textbf{Hyperparameter Calibration.}
For each baseline, we perform grid search over learning rate, batch size, network width, and exploration parameters using a held-out validation set.
The best-performing configuration per method is selected based on validation F1-score to ensure competitive and properly tuned baselines.\\

\item \textbf{Baseline Scope and Competitiveness Claims.}
DUGAT-LSTM, BiLSTM IDS, and Hybrid Weighted-XGBoost are literature-established cutting-edge baselines selected to test competitiveness against recent external methods under matched data and tuning conditions. Accordingly, the reported superiority claims are grounded in direct comparisons with these contemporary baselines rather than customized internal-only comparators.\\

\item \textbf{Robustness Across Random Seeds.}
All experiments are repeated across three independent random seeds. 
Reported metrics correspond to the mean across seeds. 
Performance differences persist consistently across seeds, with standard deviations below 0.02 for all reported metrics.

\item \textbf{Cross-Table Consistency Control.}
The full-model C-MADF result is required to be numerically identical wherever reused (main comparison and ablation tables), because both derive from the same three-seed evaluation artifact.
This consistency rule was explicitly checked to avoid transcription drift and to make table-to-table interpretation auditable.

\end{itemize}

\subsection{Ablation Study}

To quantify the contribution of individual architectural components, we conduct a structured ablation study in which specific mechanisms of C-MADF are removed while keeping all other training conditions fixed. 
These ablations serve as internal baselines to demonstrate the necessity of each component.

\begin{itemize}

\item \emph{w/o Causal SCM.}
This variant removes the learned Structural Causal Model and relies purely on correlation-based agreement between the Blue and Red teams.

\item \emph{w/o Red Team.}
This variant removes the conservative Red Team, leaving the threat-maximizing Blue Team to operate alone under causal constraints.

\item \emph{w/o Blue Team.}
This variant removes the Blue Team, relying solely on the Red Team and causal filtering.

\end{itemize}

Results are summarized in Table~\ref{tab:ablation}. 
To ensure consistency, both tables use the same three-seed averaging protocol, and the full C-MADF row is identical across both tables (Precision 0.997, Recall 0.961, F1 0.979, FPR 1.8\%). 
Each ablated configuration exhibits measurable degradation relative to the full C-MADF model. 
Removing Causal SCM yields Precision/Recall/F1/FPR of 0.903/0.959/0.930/59.4\%, indicating the importance of filtering causally inconsistent telemetry patterns. 
Disabling Red Team adversarial deliberation yields 0.963/0.963/0.963/21.2\%, suggesting that structured hypothesis testing mitigates reasoning drift under ambiguous evidence. 
Eliminating the Blue Team yields 0.957/0.972/0.964/25.1\%, highlighting the role of the threat-maximization policy in maintaining detection sensitivity.

The full C-MADF model attains the best overall performance, reflecting improved explanatory coherence under integrated causal reasoning, deliberation, and structured decision constraints.

\begin{table}[ht]
\centering
\caption{Ablation Study: Impact of Individual Components on CICIoT2023 {(mean over 3 random seeds; standard deviations $< 0.02$ for all entries)}}
\label{tab:ablation}
\begin{tabularx}{\columnwidth}{@{} l c c c c @{}}
\toprule
\textbf{Configuration} & \textbf{Precision} & \textbf{Recall} & \textbf{F1-Score} & \textbf{FPR} \\ \midrule
\textbf{C-MADF (Full Framework)} & \textbf{0.997} & \textbf{0.961} & \textbf{0.979} & \textbf{1.8\%} \\
\quad w/o Causal SCM & 0.903 & 0.959 & 0.930 & 59.4\% \\
\quad w/o Red Team & 0.963 & 0.963 & 0.963 & 21.2\% \\
\quad w/o Blue Team & 0.957 & 0.972 & 0.964 & 25.1\% \\
\bottomrule
\end{tabularx}
\end{table}

\subsection{ETS Behavior on CICIoT2023}

Beyond classification accuracy, ETS remains consistent with evidentiary sufficiency and policy agreement on CICIoT2023.
High ETS values concentrate on samples with strong causal support and low policy divergence, while lower ETS values are associated with weaker evidence and increased disagreement, supporting calibrated escalation under uncertainty.

This behavior is important because ETS is used at decision time, not merely after-the-fact. In our pipeline, ETS defines a practical confidence envelope: high-ETS decisions are eligible for autonomous execution under policy constraints, mid-ETS decisions are queued for additional evidence collection, and low-ETS decisions are escalated for human adjudication. Framing ETS this way turns explainability from a passive reporting artifact into an active governance mechanism for risk-aware response.

Although this benchmark-grounded analysis cannot substitute for full field validation, the monotonic alignment between ETS and internal reliability indicators provides preliminary evidence that the score is suitable for triage and escalation orchestration in high-tempo SOC workflows.

\begin{takeawaybox}[Empirical Takeaways]
On CICIoT2023, C-MADF records 0.997 precision, 0.961 recall, 0.979 F1-score, and 1.8\% false-positive rate, outperforming all three recent baselines in Table~\ref{tab:results}. This low-FPR result materially reduces unnecessary mitigations and supports safer autonomous response decisions under real-world telemetry conditions.
\end{takeawaybox}

\section{Discussion}
\label{sec:discussion}

The empirical results demonstrate that integrating structural causal constraints, adversarial deliberation, and human-aligned explanation mechanisms can substantially reduce false-positive behavior while preserving detection performance on CICIoT2023. At the same time, evidence from a single benchmark is not sufficient to claim full dependable or verifiable autonomous response under real-world operating conditions.
Rather than interpreting these findings as evidence of deployment readiness, we view them as validation of a design principle: constraining decision policies with causal structure and structured disagreement can mitigate correlation-driven reasoning failures in safety-critical environments.

This section discusses implications, scalability considerations, and limitations of the proposed framework.

\subsection{Implications for Critical Infrastructure Protection}

In safety-critical domains such as power grids, water treatment facilities, and financial clearing systems, false positives can induce operational disruption comparable to the attacks they aim to prevent. 
Our results indicate that structurally constraining policy execution through a learned SCM and MDP-DAG can reduce such self-inflicted disruptions without materially degrading recall.

Importantly, the observed reduction in false-positive rate is not attributable to conservative thresholding alone, as C-MADF remains consistently stronger than all three tuned literature baselines under the same data splits and telemetry channels.
This suggests that structured causal reasoning and hypothesis testing contribute mechanistically to robustness.

The ``Clarity of Command'' principle, maintaining transparent reasoning pathways and calibrated escalation, provides a potential pathway toward certifiable oversight in semi-autonomous defense systems. 
To reduce operator alert fatigue during multi-stage APT campaigns, escalation events can be batched by shared causal root node and attack phase, then prioritized by a composite risk score (policy divergence magnitude, asset criticality, and persistence across time windows) so analysts review a small ranked queue rather than every raw disagreement.
In addition, repeated disagreements linked to the same root-cause hypothesis can be collapsed into a single incident thread with accumulated evidence snapshots, preventing duplicate prompts from overwhelming operators during prolonged campaigns. A configurable ``fatigue budget'' (maximum escalations per analyst per time window) can be enforced by deferring lower-risk disagreements to batched review cycles while immediately surfacing only high-risk, high-persistence events. This policy preserves human attention for decisions with the largest potential operational impact.
However, certification in real infrastructure environments would require validation under live traffic, adversarial red-team testing, and integration with legacy operational workflows.

Because even modern benchmarks like CICIoT2023 are collected in laboratory environments rather than operational deployment logs, external validity remains an open empirical question.

\subsection{Scalability and Generalization}

C-MADF is modular by design: causal discovery, roadmap construction, deliberation, and explanation are separable components. 
This modularity supports adaptation to heterogeneous network environments and alternative telemetry schemas.

However, scaling to enterprise environments with large state spaces introduces challenges. 
Constraint-based causal discovery may become computationally intensive as variable cardinality increases, necessitating distributed or approximate inference techniques. 
In practical IT/OT deployments, the offline SCM can be re-estimated on a weekly cadence for relatively stable OT segments and on a daily or event-triggered cadence for rapidly changing enterprise IT segments, with unscheduled refreshes initiated when drift monitors flag sustained conditional-independence violations.
Operationally, we recommend a two-stage refresh workflow: (i) shadow re-estimation on recent telemetry snapshots, where the candidate SCM is evaluated against current causal-consistency diagnostics, and (ii) controlled promotion to production constraints only if admissibility stability and false-intervention risk metrics remain within predefined tolerance bounds. This reduces the risk of abrupt policy shifts caused by noisy short-term drift.
For very large deployments, the causal layer can be partitioned by subnet or function (e.g., OT control loop, enterprise identity plane, edge IoT segment), with periodic cross-partition reconciliation for shared variables. This hierarchical strategy keeps discovery tractable while preserving enough global structure to prevent contradictory cross-domain interventions.
Similarly, multi-agent deliberation overhead scales with the action hypothesis space, requiring efficient pruning or hierarchical structuring.

Beyond cyber defense, the architectural principle of combining structural causal constraints with adversarial internal review may extend to other high-stakes domains (e.g., medical triage, autonomous systems). 
Nonetheless, such transfer would require domain-specific causal abstractions and careful calibration of escalation thresholds.

\subsection{Limitations and Future Work}

Several limitations merit consideration.

First, the effectiveness of the Causal Discovery Module depends on the fidelity and representativeness of historical telemetry. 
Systematic data poisoning or persistent distribution shift may degrade structural accuracy, affecting downstream policy constraints. 
While our bounded-violation analysis characterizes this dependence formally, empirical robustness under sustained poisoning requires further study.
A practical mitigation path is to combine periodic SCM retraining with drift-triggered challenge sets containing known benign and attack archetypes; significant drops in causal-consistency or intervention-quality metrics would block model promotion and trigger conservative fallback policies.

Second, adversarial deliberation introduces additional inference overhead and assumes sufficiently diverse policy representations between Blue-Team and Red-Team agents. 
If both agents share correlated blind spots, arbitration may fail to surface alternative hypotheses.
Future work should therefore include diversity-promoting training objectives and heterogeneous model families for the two agents, so disagreement reflects substantive epistemic differences rather than noise around a shared bias.

Third, the MDP-DAG roadmap currently relies on a predefined investigation ontology. 
Although this enables verifiable transition constraints, it limits adaptability to unforeseen attack workflows. 
Future work will investigate structure learning for dynamic roadmap construction.
One promising direction is incremental ontology expansion driven by analyst-validated novel incidents, where newly observed investigation motifs are proposed, reviewed, and then compiled into admissible transition templates.

Fourth, ETS was evaluated using benchmark-grounded reliability criteria, evidentiary consistency, and policy stability metrics. While ETS exhibits coherent behavior under these conditions, deployment environments may introduce additional operational constraints and uncertainty sources not captured in benchmark data.
Accordingly, future ETS validation should include human-factors outcomes (decision latency, override frequency, perceived cognitive load) and downstream operational outcomes (avoided false interventions, time-to-containment) to determine whether score calibration transfers from benchmark settings to live SOC practice.

To address external validity concerns directly, future evaluation should extend beyond CICIoT2023 to newer public corpora, production telemetry logs, repeated-seed robustness checks, and operational-impact reporting via avoided false interventions.

Field validation using production logs and structured red-team exercises is necessary to assess real-world generalization and operational integration.

\section{Conclusion}
\label{sec:conclusion}

Autonomous cyber defense systems increasingly operate in environments where correlational reasoning alone is insufficient to ensure safe and reliable decision-making. 
This paper introduced the Causal Multi-Agent Decision Framework (C-MADF), an architecture that integrates SCM, adversarial multi-agent deliberation, and verification-oriented decision constraints to improve robustness and transparency in high-stakes security settings.

Across CICIoT2023 experiments, C-MADF achieved 0.997 precision, 0.961 recall, 0.979 F1-score, and 1.8\% FPR, compared with DUGAT-LSTM (0.979/0.905/0.941, 11.2\% FPR), BiLSTM IDS (0.982/0.915/0.947, 9.7\% FPR), and Hybrid Weighted-XGBoost (0.985/0.941/0.962, 8.4\% FPR).
The ablation study further confirms component necessity: removing Causal SCM, Red Team, or Blue Team degrades performance to 0.903/0.959/0.930/59.4\%, 0.963/0.963/0.963/21.2\%, and 0.957/0.972/0.964/25.1\%, respectively (Precision/Recall/F1/FPR).

In addition, ETS demonstrated alignment with system-internal reliability indicators and evidentiary sufficiency on benchmark telemetry; this should be viewed as preliminary validation of the signal's utility rather than definitive confirmation of real-world trustworthiness. These findings provide empirical support that structurally constrained and deliberative architectures can mitigate correlation-driven reasoning failures in cyber defense.

Future work will focus on large-scale deployment validation, robustness under sustained distribution shift and causal data poisoning, and automated construction of investigation roadmaps. 
Extending evaluation to real-world incident logs and operational red-team exercises will be essential to assess external validity. 
More broadly, the results suggest that integrating causal structure, adversarial reasoning, and human-aligned transparency may offer a promising direction toward dependable semi-autonomous cyber defense systems, contingent on stronger external and human-in-the-loop validation.

\bibliographystyle{IEEEtran}
\bibliography{references}

\end{document}